\def\ba{\begin{array}}
	\def\ea{\end{array}}
\def\be{\begin{equation}}
\def\ee{\end{equation}}
\def\bg{\begin{aligned}}
	\def\eg{\end{aligned}}
\newtheorem{definition}{Definition}
\newtheorem{proposition}[definition]{Proposition}
\newtheorem{lemma}[definition]{Lemma}
\newtheorem{theorem}[definition]{Theorem}
\newtheorem{corollary}[definition]{Corollary}
\newtheorem{conjecture}[definition]{Conjecture}
\newtheorem{remark}[definition]{Remark}
\newtheorem{example}[definition]{Example}
\newtheorem{question}[definition]{Question}
\def\squareforqed{\hbox{\rlap{$\sqcap$}$\sqcup$}}
\def\qed{\ifmmode\squareforqed\else{\unskip\nobreak\hfil
		\penalty50\hskip1em\null\nobreak\hfil\squareforqed
		\parfillskip=0pt\finalhyphendemerits=0\endgraf}\fi}
\def\endenv{\ifmmode\;\else{\unskip\nobreak\hfil
		\penalty50\hskip1em\null\nobreak\hfil\;
		\parfillskip=0pt\finalhyphendemerits=0\endgraf}\fi}
\newenvironment{proof}{\noindent \textbf{{Proof.~} }}{\qed}
\def\Dbar{\leavevmode\lower.6ex\hbox to 0pt
	{\hskip-.23ex\accent"16\hss}D}
\def\url@leostyle{%
	\@ifundefined{selectfont}{\def\UrlFont{\sf}}{\def\UrlFont{\small\ttfamily}}}
\def\bcj{\begin{conjecture}}
	\def\ecj{\end{conjecture}}
\def\bcr{\begin{corollary}}
	\def\ecr{\end{corollary}}
\def\bd{\begin{definition}}
	\def\ed{\end{definition}}
\def\bea{\begin{eqnarray}}
\def\eea{\end{eqnarray}}
\def\bem{\begin{enumerate}}
	\def\eem{\end{enumerate}}
\def\bex{\begin{example}}
	\def\eex{\end{example}}
\def\bim{\begin{itemize}}
	\def\eim{\end{itemize}}
\def\bl{\begin{lemma}}
	\def\el{\end{lemma}}
\def\bma{\begin{bmatrix}}
	\def\ema{\end{bmatrix}}
\def\bpf{\begin{proof}}
	\def\epf{\end{proof}}
\def\bpp{\begin{proposition}}
	\def\epp{\end{proposition}}
\def\bqu{\begin{question}}
	\def\equ{\end{question}}
\def\br{\begin{remark}}
	\def\er{\end{remark}}
\def\bt{\begin{theorem}}
	\def\et{\end{theorem}}
\def\btb{\begin{tabular}}
	\def\etb{\end{tabular}}
\newcommand{\nc}{\newcommand}
\def\a{\alpha}
\def\b{\beta}
\def\g{\gamma}
\def\d{\delta}
\def\t{\theta}
\def\m{\mu}
\def\n{\nu}
\def\r{\rho}
\def\s{\sigma}
\def\ps{\psi}
\def\G{\Gamma}
\def\Ph{\Phi}
\def\Ps{\Psi}
\nc{\bbA}{\mathbb{A}} \nc{\bbB}{\mathbb{B}} \nc{\bbC}{\mathbb{C}}
\nc{\bbD}{\mathbb{D}} \nc{\bbE}{\mathbb{E}} \nc{\bbF}{\mathbb{F}}
\nc{\bbG}{\mathbb{G}} \nc{\bbH}{\mathbb{H}} \nc{\bbI}{\mathbb{I}}
\nc{\bbJ}{\mathbb{J}} \nc{\bbK}{\mathbb{K}} \nc{\bbL}{\mathbb{L}}
\nc{\bbM}{\mathbb{M}} \nc{\bbN}{\mathbb{N}} \nc{\bbO}{\mathbb{O}}
\nc{\bbP}{\mathbb{P}} \nc{\bbQ}{\mathbb{Q}} \nc{\bbR}{\mathbb{R}}
\nc{\bbS}{\mathbb{S}} \nc{\bbT}{\mathbb{T}} \nc{\bbU}{\mathbb{U}}
\nc{\bbV}{\mathbb{V}} \nc{\bbW}{\mathbb{W}} \nc{\bbX}{\mathbb{X}}
\nc{\bbZ}{\mathbb{Z}}
\nc{\bA}{{\bf A}} \nc{\bB}{{\bf B}} \nc{\bC}{{\bf C}}
\nc{\bD}{{\bf D}} \nc{\bE}{{\bf E}} \nc{\bF}{{\bf F}}
\nc{\bG}{{\bf G}} \nc{\bH}{{\bf H}} \nc{\bI}{{\bf I}}
\nc{\bJ}{{\bf J}} \nc{\bK}{{\bf K}} \nc{\bL}{{\bf L}}
\nc{\bM}{{\bf M}} \nc{\bN}{{\bf N}} \nc{\bO}{{\bf O}}
\nc{\bP}{{\bf P}} \nc{\bQ}{{\bf Q}} \nc{\bR}{{\bf R}}
\nc{\bS}{{\bf S}} \nc{\bT}{{\bf T}} \nc{\bU}{{\bf U}}
\nc{\bV}{{\bf V}} \nc{\bW}{{\bf W}} \nc{\bX}{{\bf X}}
\nc{\bZ}{{\bf Z}}
\nc{\cA}{{\cal A}} \nc{\cB}{{\cal B}} \nc{\cC}{{\cal C}}
\nc{\cD}{{\cal D}} \nc{\cE}{{\cal E}} \nc{\cF}{{\cal F}}
\nc{\cG}{{\cal G}} \nc{\cH}{{\cal H}} \nc{\cI}{{\cal I}}
\nc{\cJ}{{\cal J}} \nc{\cK}{{\cal K}} \nc{\cL}{{\cal L}}
\nc{\cM}{{\cal M}} \nc{\cN}{{\cal N}} \nc{\cO}{{\cal O}}
\nc{\cP}{{\cal P}} \nc{\cQ}{{\cal Q}} \nc{\cR}{{\cal R}}
\nc{\cS}{{\cal S}} \nc{\cT}{{\cal T}} \nc{\cU}{{\cal U}}
\nc{\cV}{{\cal V}} \nc{\cW}{{\cal W}} \nc{\cX}{{\cal X}}
\nc{\cY}{{\cal Y}}
\nc{\cZ}{{\cal Z}}
\nc{\hA}{{\hat{A}}} \nc{\hB}{{\hat{B}}} \nc{\hC}{{\hat{C}}}
\nc{\hD}{{\hat{D}}} \nc{\hE}{{\hat{E}}} \nc{\hF}{{\hat{F}}}
\nc{\hG}{{\hat{G}}} \nc{\hH}{{\hat{H}}} \nc{\hI}{{\hat{I}}}
\nc{\hJ}{{\hat{J}}} \nc{\hK}{{\hat{K}}} \nc{\hL}{{\hat{L}}}
\nc{\hM}{{\hat{M}}} \nc{\hN}{{\hat{N}}} \nc{\hO}{{\hat{O}}}
\nc{\hP}{{\hat{P}}} \nc{\hR}{{\hat{R}}} \nc{\hS}{{\hat{S}}}
\nc{\hT}{{\hat{T}}} \nc{\hU}{{\hat{U}}} \nc{\hV}{{\hat{V}}}
\nc{\hW}{{\hat{W}}} \nc{\hX}{{\hat{X}}} \nc{\hZ}{{\hat{Z}}}
\nc{\hn}{{\hat{n}}}
\def\dim{\mathop{\rm Dim}}
\def\lin{\mathop{\rm span}}
\def\max{\mathop{\rm max}}
\def\rank{\mathop{\rm rank}}
\def\tr{\mathop{\rm Tr}}
\def\dg{\dagger}
\def\op{\oplus}
\def\ra{\rightarrow}
\newcommand{\bra}[1]{\langle#1|}
\newcommand{\ket}[1]{|#1\rangle}
\newcommand{\proj}[1]{| #1\rangle\!\langle #1 |}
\newcommand{\ketbra}[2]{|#1\rangle\!\langle#2|}
\newcommand{\abs}[1]{|#1|}
\def\Dbar{\leavevmode\lower.6ex\hbox to 0pt
	{\hskip-.23ex\accent"16\hss}D}
\begin{document}
	\title{Additivity of entanglement of formation via entanglement-breaking space}
	
	\date{\today}
	
	\pacs{03.65.Ud, 03.67.Mn}
	
	\author{Li-Jun Zhao}\email[]{zhaolijun@buaa.edu.cn}
	\affiliation{School of Mathematics and Systems Science, Beihang University, Beijing 100191, China}
	
	\author{Lin Chen}\email[]{linchen@buaa.edu.cn (corresponding author)}
	\affiliation{School of Mathematics and Systems Science, Beihang University, Beijing 100191, China}
	\affiliation{International Research Institute for Multidisciplinary Science, Beihang University, Beijing 100191, China}

\begin{abstract}
	We study the entanglement-breaking (EB) space, such that the entanglement of formation (EOF) of a bipartite quantum state is additive when its range is an EB subspace.
	We systematically construct the EB spaces in the Hilbert space $\bbC^m\otimes\bbC^3$, and the $2$-dimensional EB space in $\bbC^2\otimes\bbC^n$. We characterize the expression of two-qubit states of rank two with nonadditive EOF, if they exist. We further
	apply our results to construct EB spaces of an arbitrarily given dimensions. We show that the example in [PRL 89,027901(2002)] is a special case of our results. We further work out the entanglement cost of a qubit-qutrit state in terms of the two-atom system of the Tavis-Cummings model.
\end{abstract}
	\maketitle


\section{Introduction}
The entanglement  of formation (EOF) has been constructed to quantify the amount of  quantum communication for creating entangled state \cite{PhysRevA.54.3824}.
As a well-known entanglement measure, it has been widely useful in quantum information over the past decades \cite{PhysRevA.61.062102,PhysRevLett.85.2625, PhysRevLett.95.210501,PhysRevA.63.042306, PhysRevLett.121.190503}. The EOF is related to other entanglement measures such as the geometric measure of entanglement, relative entropy of entanglement, and robustness of entanglement \cite{Zhu2010Additivity}. The EOF has also been studied in
genuine quantum correlations of many-body systems, monogamy property even spin coherent states, Einstein-Podolsky-Rosen–like correlations and EOF of symmetric Gaussian states \cite{0034-4885-81-7-074002, 1751-8121-46-39-395302, PhysRevLett.91.107901}. Quantifying EOF is a key step in these problems and applications.

Hitherto, the explicit analytic formulae for EOF have been found for two-qubit system
\cite{wootters1998} and some special states such as the isotropic states and some $16\times16$ mixed states \cite{PhysRevLett.85.2625, D2018Exact,FEI2003333}.
Nevertheless, the EOF of the tensor product of two states is usually hard to compute, even if their EOF is known. The reason is that
the EOF is not additive \cite{Hastings2009Superadditivity}, though
such states are not constructive
yet. In this paper we shall investigate the additivity of EOF via the so-called entanglement breaking (EB) space.

Apart from the computing of EOF of tensor product of two states, the additivity of EOF of many states is also highly expected for the following reasons.
First, entanglement cost and distillable entanglement are two asymptotic limit of entanglement measurement with physical significance in practice \cite{RevModPhys.81.865}. The entanglement cost of preparing a state $\rho$ is equal to $E_c(\r)=\lim\limits_{n\rightarrow\infty}{1\over n}E_{f}(\rho^{\otimes n})$ where $E_{f}$ is the EOF \cite{0305-4470-34-35-314}. If $\r$ is additive then we can obtain $E_c(\r)=E_f(\r)$, and thus the problem of computing $E_c(\r)$ is kind of simplified.
There are also equivalent conditions on the additivity of EOF and minimum entropy \cite{PhysRevA.75.060304,PhysRevA.98.042338, PhysRevA.68.032317,Shor2004}.
Next, it is known that every entanglement measure can be constructed as a coherence measure, and vice versa \cite{PhysRevLett.115.020403,PhysRevA.94.022329, PhysRevLett.121.220401}. By measuring coherence correlation, we can construct the counterpart of entanglement measures, such as EOF, entanglement cost and so on.
So the additivity of coherence measures in the resource theory is intimately connected to that of EOF. We can link various quantum correlation measures in resource theory and better study the properties of quantum states \cite{PhysRevA.84.012313}.

\begin{figure}[htb]
	\includegraphics[scale=0.3,angle=0]{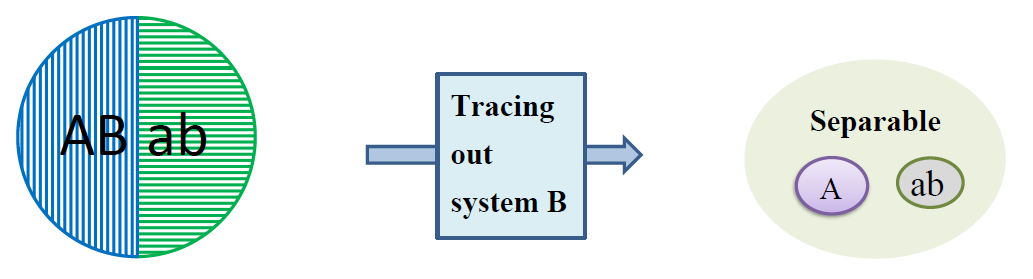}
	\caption{Four arbitrary quantum systems, denoted $A$, $B$, $a$ and $b$. Suppose tracing out system $B$ destroys the entanglement between $AB$ and $ab$. We shall refer to such a bipartite subspace in $\cH_{AB}$ as an entanglement-breaking (EB) space. The blue stripes represent the bipartite EB space and the green stripes represent any bipartite space $\mathcal{H}_{ab}$. The bipartite state whose range is an EB space has additive EOF.}
	\label{fig:binary}
\end{figure}

There have been studies highlighting the additivity of the EOF for any state supported on specific spaces \cite{PhysRevLett.89.027901}. The method relies on the assumption that tracing out one of the parties corresponds to an entanglement-breaking channel. The paper \cite{PhysRevLett.89.027901} has constructed two examples for illustrating their method. In this paper, we shall systematically develop their method by EB spaces. The latter is defined as follows. If the range of a bipartite state is an EB space, then the state has additive EOF. We refer readers to Figure \ref{fig:binary} and Definition \ref{df:ebc} for more details.

As a primary example, we show in Lemma \ref{le:ebspace} that any $1$-dimensional bipartite subspace is an EB space. Then we provide the fundamental facts on EB spaces and positive-partial-transpose (PPT) states in Lemma \ref{le:eof=ebspace} and \ref{le:rank2}. We point out that
the bipartite EB subspace in $\bbC^m\otimes\bbC^n$ has dimension at most $n$, and the upper bound is saturated in Lemma \ref{le:ebdim}. Then we construct the $2$-dimensional EB space in $\bbC^m\otimes\bbC^2$ in Lemma \ref{le:2dim=2qubit}, and extend it to high dimensions. In particular, we construct the EB spaces in $\bbC^m\otimes\bbC^3$, and the $2$-dimensional EB space in $\bbC^2\otimes\bbC^n$ in Theorem \ref{thm:mx3} and Corollary \ref{cr:2xm}, respectively. As a preliminary result, we shall investigate the $2$-dimensional EB space in $\bbC^m\otimes\bbC^3$ by Theorem \ref{thm:2dim=3x2}.
As an application of our result, we characterize in Lemma \ref{le:nonadditive} the expression of two-qubit states of rank two with non-additive EOF, if they exist. It sheds light on the open problem of constructing states with non-additive EOF.

To provide systematic methods of constructing EB spaces of high dimensions, we show that the tensor product of two EB spaces is also an EB space in Theorem \ref{thm:votimesw}. We extend it to the tensor product of arbitrarily many EB spaces, and the $B$-direct sum of EB spaces in Corollary \ref{cr:votimesw}. Since any subspace of an EB space is an EB space, we can construct EB spaces of any dimension.

We also consider a specific scenario, namely the Tavis-Cummings (TC) model where the qubit-qutrit state can be physically realized. We calculate analytically the EOF as a function of time for the qubit-qutrit mixed state \cite{PhysRevA.85.022320}. We further prove that the TC model belongs to a 2-dimensional EB space, so it satisfies additivity, and the entanglement cost of TC model can be calculated.

The rest of this paper is as follows. In
Sec. \ref{sec:pre} we introduce the preliminary knowledge we need in this paper. In Sec. \ref{sec:main} we introduce the main result. In Sec. \ref{sec:higher} we apply our results to construct more EB spaces of higher dimensions. In Sec. \ref{sec:moreapp} we introduce the physical application of our results. Finally we conclude in Sec. \ref{sec:con}.

\section{Preliminaries}
\label{sec:pre}
For two $n$-partite states $\r,\s$,
suppose there is an $n$-partite product matrix $U=\otimes^n_{j=1}U_j$ such that
$U^\dg\r U=\s$. We say that $\r$ is convertible to $\s$ under stochastic local operations and classical communications (SLOCC). If $U$ is unitary then we say that $\r$ and $\s$ are locally unitarily (LU) equivalent. The two kinds of equivalence are both realizable in experiments, and extensively useful in quantum information \cite{Baerdemacker2017The}. For example, the three-qubit pure states have been classified in terms both SLOCC and LU equivalence, and the former generates the known Greenberger-Horne-Zeilinger (GHZ) and W states \cite{dvc2000}. In the following, we review the space for studying the additivity of EOF.
\begin{definition}
	\label{df:ebc}	
	(i) We refer to $A$, $B$, $a$ and $b$ as four quantum systems, and a subspace $V\subseteq\mathcal{H}_{AB}$. Let $\ket{\Ps}_{Vab}\in V\otimes\cH_{ab}$ be an arbitrary bipartite pure state. Suppose that tracing out system $B$ destroys all entanglement between $AB$ with $ab$, i.e.,
	\begin{equation}\label{EOFth}
	\tr{_B}(\ket{\Psi}_{Vab}\bra{\Psi})=\sum_{n}q_{n}\ket{\mu_{n}}_{A}\bra{\mu_{n}}\otimes \ket{\nu_{n}}_{ab}\bra{\nu_{n}}.
	\end{equation}
	We shall refer to such $V$ as an entanglement-breaking (EB) space.
	
	(ii) Let $V=\lin\{\ket{a_1},...,\ket{a_m}\}$ and $V'=\lin\{\ket{b_1},...,\ket{b_n}\}$ be two subspaces in $\cH_{AB}$, and they respectively have dimension $m$ and $n$. We say that $V$ is EB-convertible to $V'$, if there exists a product matrix $W_A\otimes U_B$ with a matrix $W$ and a unitary matrix $U$, such that
	\begin{eqnarray}
	\label{eq:convertible}
	V'=
	\lin\{
	(W_A\otimes U_B)\ket{a_i},\quad i=1,...,m\}.
	\end{eqnarray}
	Further, we say that the two spaces $V$ and $V'$ are EB-equivalent if the matrix $W$ is invertible.
	\qed
\end{definition}
The definition implies that if $V$ is EB-convertible to $V'$ then $\ket{b_j}$ is the linear combination of
the states $(W_A\otimes U_B)\ket{a_i}$'s. So $\dim V\ge \dim V'$. As a result, two EB-equivalent spaces have the same dimension.

Part (i) of Definition \ref{df:ebc} was constructed in \cite{PhysRevLett.89.027901}. It implies the following fact.
\begin{lemma}
	\label{le:ebspace}
	(i) Any subspace of an EB space is still an EB space.
	
	(ii) If $V$ is an EB space, then so is $(W_A\otimes U_B)V$ for any matrix $W$ and any unitary matrix $U$. The converse also holds if $W$ is invertible.
	
	(iii) Any $1$-dimensional bipartite space is an EB space.
\end{lemma}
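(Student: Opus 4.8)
The plan is to verify each of the three parts directly from Definition~\ref{df:ebc}(i); all three amount to short manipulations of partial traces, and I do not anticipate a genuine obstacle. For part~(i), if $V'\subseteq V$ and $V$ is EB, then any bipartite pure state $\ket{\Psi}_{V'ab}\in V'\otimes\cH_{ab}$ also belongs to $V\otimes\cH_{ab}$, so Eq.~\eqref{EOFth} holds for it by the EB property of $V$; hence $V'$ is EB.

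For part~(ii) the key point is that conjugating by $I_A\otimes U_B$ with $U$ unitary leaves the partial trace over $B$ unchanged, while conjugating by $W_A\otimes I_B$ commutes with $\tr_B$. Concretely, every vector of $\bigl((W_A\otimes U_B)V\bigr)\otimes\cH_{ab}$ has the form $(W_A\otimes U_B\otimes I_{ab})\ket{\Psi}$ for some $\ket{\Psi}\in V\otimes\cH_{ab}$ (and if it vanishes the EB condition is vacuous), and for any such $\ket{\Psi}$ one has
\begin{equation}
\tr_B\bigl[(W_A\otimes U_B\otimes I_{ab})\proj{\Psi}(W_A^\dg\otimes U_B^\dg\otimes I_{ab})\bigr]
=(W_A\otimes I_{ab})\bigl(\tr_B\proj{\Psi}\bigr)(W_A^\dg\otimes I_{ab}).
\end{equation}
Substituting Eq.~\eqref{EOFth} for $V$ into the right-hand side and noting that each $W_A\proj{\mu_n}W_A^\dg$ equals $\norm{W_A\ket{\mu_n}}^2$ times a rank-one projector on $A$ (or zero, in which case the term is dropped), the result is again a separable decomposition of the form~\eqref{EOFth} after overall normalization. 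Thus $(W_A\otimes U_B)V$ is EB. For the converse, if $W$ is invertible I would apply the forward direction to the EB space $(W_A\otimes U_B)V$ with the pair $(W^{-1},U^{\dg})$, which recovers $V$.

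For part~(iii), let $V=\lin\{\ket{\psi}_{AB}\}$ be one-dimensional with $\ket{\psi}$ a unit vector. Since $\dim V=1$, every state $\ket{\Psi}_{Vab}\in V\otimes\cH_{ab}$ factors as $\ket{\psi}_{AB}\otimes\ket{\phi}_{ab}$, so $\tr_B\proj{\Psi}=\rho_A\otimes\proj{\phi}_{ab}$ with $\rho_A=\tr_B\proj{\psi}$. Writing the spectral decomposition $\rho_A=\sum_n p_n\proj{\mu_n}_A$ puts this in the form~\eqref{EOFth} with $\ket{\nu_n}=\ket{\phi}_{ab}$ for every $n$, so $V$ is EB. (Alternatively, one may bring $\ket{\psi}$ to Schmidt form by a local unitary and invoke part~(ii).)

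I do not expect a real obstacle; the only step requiring a moment of care is part~(ii), where one must apply the partial-trace identity $\tr_B[(I_A\otimes U_B)X(I_A\otimes U_B^\dg)]=\tr_B[X]$ to the correct subsystem and discard the degenerate terms with $W_A\ket{\mu_n}=0$.
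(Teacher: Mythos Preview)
Your proposal is correct and follows essentially the same approach as the paper. The paper dispatches (i) and (ii) in a single sentence (``follow from the definition of EB spaces'') and writes out only (iii), using precisely your argument that any state in $V\otimes\cH_{ab}$ factors as $\ket{\psi}_{AB}\otimes\ket{\phi}_{ab}$ and hence yields a product between $A$ and $ab$ after tracing out $B$; your version is simply more explicit in spelling out the partial-trace identity for (ii) and the handling of degenerate terms.
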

Since assertion (i) and (ii) follow from the definition of EB spaces, we only prove (iii). Suppose $V\subseteq\cH_{AB}$ is a $1$-dimensional subspace spanned by $\ket{\ps}$. So any state in $V\otimes\cH_{ab}$ can be written as $\ket{\ps}_{AB}\otimes\ket{\phi}_{ab}$.
By tracing out the system $B$, we obtain a product state between system $A$ and $ab$. Definition \ref{df:ebc}	implies that $V$ is an EB space.

We demonstrate the above definition and lemma by studying the relation of the two EB spaces $U$ and $V$ proposed in \cite{PhysRevLett.89.027901}.
They are respectively spanned by
\footnote{Eq. \eqref{eq:6*3PRL} is originally obtained by Eq. (28) of \cite{PhysRevLett.89.027901}, which has a misprint in the third state $\ket{2}_U$. That is, $\ket{2,5}$ was wrongly written as $\ket{0,5}$ in Eq. (28) of \cite{PhysRevLett.89.027901}. We have corrected the misprint in Eq. \eqref{eq:6*3PRL}.},
\begin{eqnarray}
\label{eq:6*3PRL}
	\begin{aligned}
\ket{0}_U&=\frac{1}{2}(\ket{1,2}+\ket{2,1}+\sqrt{2}\ket{0,3}),\\
\ket{1}_U&=\frac{1}{2}(\ket{2,0}+\ket{0,2}+\sqrt{2}\ket{1,4}),\\
\ket{2}_U&=\frac{1}{2}(\ket{0,1}+\ket{1,0}+\sqrt{2}\ket{2,5}),
\end{aligned}
\end{eqnarray}
and
\begin{eqnarray}
\label{2*3PRL}
\begin{aligned}
\ket{0}_{V}&=\frac{1}{\sqrt{3}}(\ket{0,2}-\sqrt{2}\ket{1,0}),\\
\ket{1}_{V}&=\frac{-1}{\sqrt{3}}(\ket{1,2}-\sqrt{2}\ket{0,1}).\\	
\end{aligned}		
\end{eqnarray}
Let the product matrix $W=(\ketbra{0}{1}-\ketbra{1}{0})\otimes(\ketbra{0}{3}+\ketbra{1}{4}+\proj{2}+\ketbra{4}{1}+\ketbra{3}{0}+\proj{5})$. One can verify that $W\ket{0}_U\propto\ket{0}_V$ and $W\ket{1}_U\propto\ket{1}_V$. Lemma \ref{le:ebspace} shows that the subspace spanned by $\ket{0}_U,\ket{1}_U$ is an EB space. Definition \ref{df:ebc} shows that
the subspace is EB convertible to the subspace spanned by $\ket{0}_V,\ket{1}_V$. It is also an EB space by Lemma \ref{le:ebspace}. We have shown the conclusion of \cite{PhysRevLett.89.027901} using Definition \ref{df:ebc} and Lemma \ref{le:ebspace}.

The importance of EB space is that, the bipartite state has additive EOF when its range is an EB space \cite{PhysRevLett.89.027901}. That is,

\begin{lemma}\label{le:eof=ebspace}
	Let $\rho_{V}\in\mathcal{B}(V)$ such that $V\subseteq\cH_A\otimes\cH_B$ is an EB subspace. Then the EOF of $\rho_{V}$ is additive, i.e.,
	\begin{eqnarray}
	E_{f}(\rho_{V}\otimes \sigma)=E_{f}(\rho_{V})+E_{f}(\sigma),
	\end{eqnarray}
where the bipartite state $\r_V\otimes\sigma\in\cB(\cH_{Aa}\otimes\cH_{Bb})$,
	and $\sigma\in\cB(\cH_{ab})$ is an arbitrary bipartite state.
	
	As a corollary, by setting $\r_V^{\otimes k}=\s$ for any integer $k\ge1$ we have $E_{f}(\rho_{V})=E_{c}(\rho_{V})$, i.e., the EOF and entanglement cost of $\r_V$ are the same.
\end{lemma}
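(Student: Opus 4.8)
The plan is to establish the two inequalities $E_f(\r_V\ox\s)\le E_f(\r_V)+E_f(\s)$ and $E_f(\r_V\ox\s)\ge E_f(\r_V)+E_f(\s)$ separately. Subadditivity holds for arbitrary bipartite states: if $\{p_i,\ket{\a_i}\}$ and $\{q_j,\ket{\b_j}\}$ are optimal pure-state decompositions of $\r_V$ and $\s$, then $\{p_iq_j,\ket{\a_i}\ox\ket{\b_j}\}$ is a (generally non-optimal) decomposition of $\r_V\ox\s$ for the cut $\cH_{Aa}|\cH_{Bb}$ whose average pure-state entanglement equals $E_f(\r_V)+E_f(\s)$. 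So the entire content lies in superadditivity, and that is the only place the EB hypothesis is needed.

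For superadditivity I would fix an optimal decomposition $\r_V\ox\s=\sum_ip_i\proj{\Psi_i}$ for the cut $\cH_{Aa}|\cH_{Bb}$. Since the range of $\r_V\ox\s$ is contained in $V\ox\cH_{ab}$, every $\ket{\Psi_i}$ lies in $V\ox\cH_{ab}$. Writing $\r^{(i)}_{AB}=\tr_{ab}(\proj{\Psi_i})$ and $\r^{(i)}_{ab}=\tr_{AB}(\proj{\Psi_i})$, one has $\sum_ip_i\r^{(i)}_{AB}=\r_V$ and $\sum_ip_i\r^{(i)}_{ab}=\s$, so by convexity of $E_f$ it suffices to prove the single-state estimate
\be
\label{eq:singlestate}
S\big(\tr_{Bb}(\proj{\Psi})\big)\ \ge\ E_f\big(\tr_{ab}(\proj{\Psi})\big)_{A|B}+E_f\big(\tr_{AB}(\proj{\Psi})\big)_{a|b}
\ee
for every $\ket{\Psi}\in V\ox\cH_{ab}$, the left-hand side being the entropy of entanglement of $\ket{\Psi}$ across $\cH_{Aa}|\cH_{Bb}$; summing \eqref{eq:singlestate} over $i$ with weights $p_i$ and invoking convexity of $E_f$ twice then gives superadditivity.

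The estimate \eqref{eq:singlestate} is where the real work lies, and I would prove it with two applications of the Koashi--Winter relation, which says that for any pure state on three systems $XYZ$, $S(\r_X)=E_f(\r_{XZ})+C_Y(\r_{XY})$, where $C_Y(\r_{XY})=\max_{\{M_k\}\text{ on }Y}\big(S(\r_X)-\sum_kp_kS(\r_{X|k})\big)$ is the one-way classical correlation obtained by measuring $Y$. Applying it to $\ket{\Psi}$ with $X=Aa$, $Y=B$, $Z=b$ gives $S\big(\tr_{Bb}(\proj{\Psi})\big)=E_f\big(\tr_B(\proj{\Psi})\big)_{Aa|b}+C_B\big(\r_{(Aa)B}\big)$. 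Here $E_f\big(\tr_B(\proj{\Psi})\big)_{Aa|b}\ge E_f\big(\tr_{AB}(\proj{\Psi})\big)_{a|b}$ because discarding the subsystem $A$ is local and $E_f$ is $\locc$-monotone, and $C_B\big(\r_{(Aa)B}\big)\ge C_B(\r_{AB})$ because the one-way classical correlation cannot decrease when the unmeasured party is enlarged (data processing for $I(X\!:\!K)$). Applying Koashi--Winter a second time with $X=A$, $Y=B$, $Z=ab$ gives $S(\r_A)=E_f\big(\tr_B(\proj{\Psi})\big)_{A|(ab)}+C_B(\r_{AB})$; since $\ket{\Psi}\in V\ox\cH_{ab}$ and $V$ is an EB space, Definition \ref{df:ebc}(i) forces $\tr_B(\proj{\Psi})$ to be separable across $A|(ab)$, so its EOF vanishes and $C_B(\r_{AB})=S(\r_A)\ge E_f(\r_{AB})_{A|B}$, the last step being the standard bound $E_f\le\min\{S(\r_A),S(\r_B)\}$. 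Chaining these inequalities gives \eqref{eq:singlestate}. (The equality $C_B(\r_{AB})=S(\r_A)$ can also be seen elementarily: by the Schr\"odinger--HJW theorem some measurement on $B$ produces on $A(ab)$ the product ensemble in Eq.~\eqref{EOFth}, which on $A$ alone is a pure-state ensemble, so $\sum_kp_kS(\r_{A|k})=0$.) The corollary is then routine: taking $\s=\r_V^{\ox k}$ in the additivity relation and inducting on $k$ gives $E_f(\r_V^{\ox n})=nE_f(\r_V)$, hence $E_c(\r_V)=\lim_{n\to\infty}\tfrac1nE_f(\r_V^{\ox n})=E_f(\r_V)$.

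The main obstacle is exactly the single-state estimate \eqref{eq:singlestate}: plain (sub)additivity of von Neumann entropy is far too weak to recover the sum of the two reduced EOFs, and one genuinely must convert the qualitative hypothesis ``$\tr_B$ is entanglement breaking across $A|(ab)$'' into the quantitative statement that the whole of $S(\r_A)$ is classical information accessible by measuring $B$, which is what the EB property together with Koashi--Winter provides. A secondary point needing care is tracking the direction of each monotonicity step (discarding a subsystem lowers $E_f$; enlarging the unmeasured party raises $C_B$), and verifying that the pure states in the optimal decomposition really do lie in $V\ox\cH_{ab}$.
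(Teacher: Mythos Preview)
The paper does not supply a proof of this lemma; it quotes the result from \cite{PhysRevLett.89.027901} (Vidal--D\"ur--Cirac) and uses it as a black box, so there is no in-paper argument to compare against. Your proof is correct and complete.

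It is worth noting how your route relates to the original one. Vidal--D\"ur--Cirac (2002) predate the Koashi--Winter relation, so their argument is more elementary: from the separable decomposition $\tr_B\proj{\Psi}=\sum_n q_n\proj{\m_n}_A\otimes\proj{\n_n}_{ab}$ one reads off $\r_{Aa}=\sum_n q_n\proj{\m_n}\otimes\tr_b\proj{\n_n}$, and then a single application of strong subadditivity (to the classical extension $\sum_n q_n\proj{n}\otimes\proj{\m_n}\otimes\tr_b\proj{\n_n}$) gives
\[
S(\r_{Aa})\ \ge\ S(\r_A)+\sum_n q_n\,S\big(\tr_b\proj{\n_n}\big)\ \ge\ E_f(\r_{AB})+E_f(\r_{ab}),
\]
the last step using $E_f(\r_{AB})\le S(\r_A)$ and the fact that $\{q_n,\ket{\n_n}\}$ decomposes $\r_{ab}$. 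Your two invocations of Koashi--Winter package exactly this SSA step together with the HJW observation you mention parenthetically; what you gain is a tidy bookkeeping of the two halves of the bound via $C_B$, at the cost of importing a heavier lemma. Either way the single-state estimate is the crux, and you have identified and proved it cleanly.
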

The result shows that finding an EB space is a systematic method of constructing entangled states with additive EOF. Recall that Ref. \cite{PhysRevLett.89.027901}  has proposed only two examples, namely \eqref{eq:6*3PRL} and \eqref{2*3PRL}. Further, it is known that if the EOF of two states are additive, then so is the EOF of their tensor product  \cite{Chen2007Entanglement}. If we choose $\s$ with additive EOF in \eqref{le:eof=ebspace}, then we can obtain the bipartite state $\r_V\otimes\s$ with additive EOF, even if the range of $\s$ is not in an EB space. This is why we want to characterize EB spaces in this paper.

For introducing the techniques  in the upcoming sections, we refer to an $m\times n$ state $\r$ in the sense that $\rank\r_A=m$ and $\rank\r_B=n$. Further, the Peres-Horodecki criterion says that if a bipartite state is separable then it has positive partial transpose (PPT), and the converse holds for qubit-qutrit states \cite{peres1996,hhh96}. The following assertions have been widely used in literatures \cite{cd12,Chen2013The, Lin2013Separability}, and can be proven straightforwardly using the Peres-Horodecki criterion.
\begin{lemma}
	\label{le:rank2}
	(i) The bipartite state of rank at most three is separable if and only if it is PPT.

	(ii) The $m\times n$ state of rank $\max\{m,n\}$ is separable if and only if it is PPT. In this case the state is the convex sum of $\max\{m,n\}$ pure product states.
	
	(iii) If $\ket{\a}$ is a bipartite pure entangled state and $\b=\proj{\g}\otimes\d$ is a bipartite mixed state, then the bipartite state $\proj{\a}+\proj{\g}\otimes\d$ is entangled. 	
	
	(iv) The $m\times n$ state of rank $r$ is entangled if $\max\{m,n\}>r$.
\end{lemma}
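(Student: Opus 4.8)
The four assertions all follow from the Peres--Horodecki criterion together with the structure of low-rank PPT states, and I would treat them in increasing order of difficulty. Record first one bookkeeping fact used throughout: if $\rho=\sum_i p_i\proj{a_i}\otimes\proj{b_i}$ is separable, then, absorbing norms, $\rho_A=\tr_B\rho=\sum_i p_i'\proj{a_i}$, so the $\ket{a_i}$ span the range of $\rho_A$, and symmetrically for $B$. Also note that the ``only if'' directions of (i) and (ii) are immediate, since a convex combination of product states is PPT.

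\emph{Assertions (iv) and the lower bound in (ii).} This is purely linear-algebraic. Let $\rho$ be an $m\times n$ separable state, so $\rank\rho_A=m$; pick among the $\ket{a_i}$ a linearly independent subset $\ket{a_{i_1}},\dots,\ket{a_{i_m}}$ spanning the range of $\rho_A$. The product vectors $\ket{a_{i_1}}\otimes\ket{b_{i_1}},\dots,\ket{a_{i_m}}\otimes\ket{b_{i_m}}$ are then linearly independent: applying $\bra{b_{i_k}}$ to the $B$ factor of a relation $\sum_j c_j\ket{a_{i_j}}\otimes\ket{b_{i_j}}=0$ gives $\sum_j c_j\langle b_{i_k}|b_{i_j}\rangle\ket{a_{i_j}}=0$, hence $c_j\langle b_{i_k}|b_{i_j}\rangle=0$ for each $j$, and $j=k$ forces $c_k=0$. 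Thus $\rank\rho\ge m$, and by symmetry $\rank\rho\ge n$, so $\rank\rho\ge\max\{m,n\}$. This is the lower bound needed in (ii), and its contrapositive is exactly (iv).

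\emph{Assertion (iii).} Suppose $\sigma:=\proj\alpha+\proj\gamma\otimes\delta$ were separable, and set $P:=I_A-\proj\gamma$. Local filtering preserves separability, so $(P\otimes I_B)\sigma(P\otimes I_B)=\proj{\alpha'}$ with $\ket{\alpha'}:=(P\otimes I_B)\ket\alpha$ is separable, i.e. $\ket{\alpha'}$ has Schmidt rank at most $1$. Since compressing the $A$ factor by a corank-one projector lowers the Schmidt rank by at most $1$, $\ket\alpha$ must have Schmidt rank exactly $2$, and $\ket\gamma$ must lie in the span of its $A$-Schmidt vectors; hence $\sigma$ is supported on $V_A\otimes\cH_B$ with $V_A=\lin\{\ket\gamma,\ket{\gamma^\perp}\}$ two-dimensional, and $\ket\alpha=\ket\gamma\otimes\ket p+\ket{\gamma^\perp}\otimes\ket q$ with $\ket p,\ket q$ linearly independent (else $\ket\alpha$ would be a product vector). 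Writing the partial transpose $\sigma^{\Gamma}$ as a block matrix in the ordered basis $\{\ket\gamma,\ket{\gamma^\perp}\}$ of $V_A$,
\begin{equation}
\sigma^{\Gamma}=
\begin{pmatrix}
\proj{\bar p}+\delta^{T} & \ketbra{\bar q}{\bar p}\\
\ketbra{\bar p}{\bar q} & \proj{\bar q}
\end{pmatrix},
\end{equation}
and positivity of such a block matrix would force the range of the lower-left block $\ketbra{\bar p}{\bar q}$ into the range of the lower-right corner $\proj{\bar q}$, i.e. $\ket{\bar p}\in\lin\{\ket{\bar q}\}$, contradicting the independence of $\ket p,\ket q$. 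Hence $\sigma$ is not PPT, so by Peres--Horodecki it is entangled.

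\emph{Assertions (i) and (ii), ``if'' direction.} This is the substantive content, and I would run the standard range/subtraction argument. Given a PPT state $\rho$ of the relevant low rank, locate a product vector $\ket e\otimes\ket f$ in the range of $\rho$ with $\ket{\bar e}\otimes\ket f$ in the range of $\rho^{\Gamma}$, subtract $t\,\proj e\otimes\proj f$ with $t>0$ maximal subject to $\rho-t\,\proj e\otimes\proj f\ge0$ and $(\rho-t\,\proj e\otimes\proj f)^{\Gamma}\ge0$; at the maximal $t$ the rank drops while PPT-ness is preserved, and one iterates down to rank $1$, where a PPT state is a pure product state because the partial transpose of a rank-one state is indefinite unless its Schmidt rank is $1$. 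For (ii) one additionally tracks the rank drops to see that exactly $\max\{m,n\}$ product vectors are extracted, which together with the lower bound from (iv) yields the ``convex sum of $\max\{m,n\}$ pure product states'' statement. The one nontrivial ingredient — that a PPT state of rank at most three, respectively one whose rank equals $\max\{m,n\}$, always admits such a subtractable product vector, i.e. is never an edge state — is exactly where the rank hypotheses are used, and it is the main obstacle for a self-contained proof; I would import it from the cited analyses of low-rank PPT states \cite{cd12,Chen2013The,Lin2013Separability} via the Horodecki--Lewenstein--Vidal--Cirac range criterion, after which (i) and (ii) follow by the induction just described.
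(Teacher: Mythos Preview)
Your proposal is correct and, in fact, considerably more detailed than what the paper offers: the paper does not prove this lemma at all but simply remarks that the assertions ``have been widely used in literatures \cite{cd12,Chen2013The,Lin2013Separability}, and can be proven straightforwardly using the Peres-Horodecki criterion.'' So there is no approach in the paper to compare against; you have supplied one.

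A few remarks on your arguments. Your proof of (iv) is clean; the linear-independence step is fine as written. For (iii), the local-filtering reduction to a $2\times n$ block and the subsequent range obstruction for $\sigma^\Gamma$ are correct; you might add one line to dispose of the case $(P\otimes I_B)\ket\alpha=0$, which would force $\ket\alpha\in\lin\{\ket\gamma\}\otimes\cH_B$ and hence be a product vector, contradicting the hypothesis. For (i) and (ii) you rightly isolate the one substantive input---that PPT states of rank at most three, respectively of rank $\max\{m,n\}$, are never edge states---and defer it to the same references the paper cites; that is exactly the level at which the paper operates. The iterative subtraction you outline is the standard mechanism (Lewenstein--Sanpera / Horodecki--Lewenstein--Vidal--Cirac), and your bookkeeping for the count of $\max\{m,n\}$ product terms in (ii) is consistent with the rank bound you proved in (iv).
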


Combining with the definition of EB spaces, Lemma \ref{le:rank2} (iv) shows that
\begin{lemma}
\label{le:ebdim}	
The bipartite EB subspace in $\bbC^m\otimes\bbC^n$ has dimension at most $n$. The upper bound is saturated when the subspace is spanned by $\ket{a_1,1},...,\ket{a_n,n}$.
\end{lemma}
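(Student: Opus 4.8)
The plan is to treat the two assertions separately. For the dimension bound I would probe the EB space $V$ with a maximally entangled vector and then invoke Lemma~\ref{le:rank2}(iv); for the saturation claim I would just compute the reduced state of an arbitrary probe on the displayed space and read off that it is classical--quantum, hence separable.

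\emph{Upper bound.} Let $V\subseteq\bbC^m\otimes\bbC^n$ be an EB space, write $d=\dim V$, and fix an orthonormal basis $\ket{e_1},\dots,\ket{e_d}$ of $V$. I would take the ancilla $ab$ to be a single system of dimension $d$ and set
\[
\ket{\Psi}_{Vab}=\frac{1}{\sqrt{d}}\sum_{i=1}^{d}\ket{e_i}_{AB}\otimes\ket{i}_{ab}\in V\otimes\cH_{ab},
\]
and put $\rho=\tr_B\proj{\Psi}$, a state on $\cH_A\otimes\cH_{ab}=\bbC^m\otimes\bbC^d$. Reading $\ket{\Psi}$ across the cut $(Aab)\,|\,B$ gives Schmidt rank at most $\dim\cH_B=n$, so $\rank\rho\le n$; meanwhile $\rho_{ab}=\tr_{AB}\proj{\Psi}=\frac{1}{d}\,I_d$ since the $\ket{e_i}$ are orthonormal, so $\rank\rho_{ab}=d$. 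If $d>n$, then $\rho$ is a state on $\bbC^m\otimes\bbC^d$ with $\max\{\rank\rho_A,\rank\rho_{ab}\}\ge d>n\ge\rank\rho$, whence Lemma~\ref{le:rank2}(iv) forces $\rho$ to be entangled. This contradicts the EB property of $V$, which requires $\rho$ to have the form \eqref{EOFth}, i.e.\ to be separable across $A\,|\,ab$. Hence $d\le n$.

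\emph{Saturation.} Take $V=\lin\{\ket{a_1,1},\dots,\ket{a_n,n}\}$ with nonzero $\ket{a_i}\in\bbC^m$. Since $\ket{1},\dots,\ket{n}$ are orthonormal in $\cH_B$, the spanning vectors are linearly independent, so $\dim V=n$. For an arbitrary $\ket{\Psi}\in V\otimes\cH_{ab}$ I would write $\ket{\Psi}=\sum_{i=1}^{n}\ket{a_i}_A\otimes\ket{i}_B\otimes\ket{\phi_i}_{ab}$ with $\ket{\phi_i}\in\cH_{ab}$; tracing out $B$ kills the cross terms $\ketbra{i}{j}$ ($i\neq j$) and leaves
\[
\tr_B\proj{\Psi}=\sum_{i=1}^{n}\proj{a_i}_A\otimes\proj{\phi_i}_{ab},
\]
which after normalization and relabeling is precisely the classical--quantum form \eqref{EOFth}. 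So $V$ is an EB space of dimension $n$, and the bound is attained.

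The computations are routine; the points that need care are (a) that Definition~\ref{df:ebc} permits the ancilla $ab$ to be a single $d$-dimensional system (equivalently, one of $a,b$ trivial), which is what legitimizes using a maximally entangled probe, and (b) that one may \emph{not} conclude $d\le n$ from ranks alone — a maximally entangled $\rho$ can have $\rank\rho_{ab}$ strictly larger than $\rank\rho$ — so the appeal to separability through Lemma~\ref{le:rank2}(iv) is genuinely needed. Beyond that, everything is bookkeeping.
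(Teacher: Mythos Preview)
Your proof is correct and follows essentially the same route as the paper. The paper also builds the maximally entangled probe $\ket{\psi}=\frac{1}{\sqrt{n+1}}\sum_j\ket{a_j,j}$ on a basis of an $(n{+}1)$-dimensional subspace and invokes Lemma~\ref{le:rank2}(iv) to conclude that $\tr_B\proj{\psi}$ is entangled; your version with an orthonormal basis and explicit rank bookkeeping is a cleaner rendering of the same idea, and you additionally spell out the saturation computation, which the paper leaves implicit.
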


Equivalently, any $(n+1)$-dimensional subspace in $\bbC^m\otimes\bbC^n$ is not an EB space. In fact, if the subspace is spanned by $\ket{a_1},..,\ket{a_{n+1}}$ and $\ket{\ps}={1\over\sqrt{n+1}}\sum_j\ket{a_j,j}\in\cH_{AB:ab}$, then $\tr_B\proj{\ps}$ is a bipartite entangled state of system $A$ and $ab$.

\section{Main results}
\label{sec:main}

In Lemma \ref{le:ebspace} we have shown that any $1$-dimensional bipartite subspace is an EB space. However characterizing $n$-dimensional EB spaces with $n\ge2$ is an open problem. So far only a few examples have been constructed \cite{PhysRevLett.89.027901}. In this section we construct the EB space in $\bbC^m\otimes\bbC^3$, and the $2$-dimensional EB space in $\bbC^2\otimes\bbC^n$.
We begin by studying the $2$-dimensional EB space in $\bbC^m\otimes\bbC^2$, and then extend it to high dimensions.

\begin{lemma}
	\label{le:2dim=2qubit}
	For any positive integer $m$, the 2-dimensional subspace in $\bbC^m\otimes\bbC^2$ is an EB space if and only if it is spanned by
	$
	\{\ket{x,0}, \ket{y,1}\}$ up to LU equivalence.
\end{lemma}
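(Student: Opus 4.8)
The statement is an equivalence; the plan is to treat the two directions separately, the forward (``if'') one being a short verification. For it, let $V=\lin\{\ket{x,0},\ket{y,1}\}$ and let $\ket{\Psi}\in V\otimes\cH_{ab}$ be arbitrary, for an arbitrary bipartite ancilla $\cH_{ab}$. Since $\ket0_B$ and $\ket1_B$ are orthonormal, every such $\ket{\Psi}$ has the form $\ket{x,0}\otimes\ket{\phi_0}_{ab}+\ket{y,1}\otimes\ket{\phi_1}_{ab}$ for some (unnormalized) $\ket{\phi_0},\ket{\phi_1}\in\cH_{ab}$. Expanding $\proj{\Psi}$ and tracing out $B$ kills the two cross terms (as $\tr\ketbra01=0$), leaving $\proj{x}_A\otimes\proj{\phi_0}_{ab}+\proj{y}_A\otimes\proj{\phi_1}_{ab}$, which is of the separable form required by \eqref{EOFth}. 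Hence $\lin\{\ket{x,0},\ket{y,1}\}$ is an EB space, and by Lemma~\ref{le:ebspace}(ii) so is any space LU-equivalent to it.

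For the converse, suppose $V=\lin\{\ket{v_1},\ket{v_2}\}\subseteq\bbC^m\otimes\bbC^2$ is an EB space, and write $\ket{v_i}=\ket{a_i}_A\ket0_B+\ket{b_i}_A\ket1_B$. First I would specialize the definition to the qubit ancilla $\cH_{ab}=\bbC^2$ and the state $\ket{\Psi}=\tfrac{1}{\sqrt2}\bigl(\ket{v_1}\ket0_{ab}+\ket{v_2}\ket1_{ab}\bigr)$; regrouping by the $B$-basis gives $\ket{\Psi}=\tfrac{1}{\sqrt2}\bigl(\ket0_B\ket{P}_{A,ab}+\ket1_B\ket{Q}_{A,ab}\bigr)$ with $\ket P=\ket{a_1}\ket0_{ab}+\ket{a_2}\ket1_{ab}$ and $\ket Q=\ket{b_1}\ket0_{ab}+\ket{b_2}\ket1_{ab}$, so $\sigma:=\tr_B\proj{\Psi}=\tfrac12(\proj P+\proj Q)$ must be separable across $A:ab$. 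Now $\rank\sigma\le2$, and $\sigma_A:=\tr_{ab}\sigma=\tfrac12(\proj{a_1}+\proj{a_2}+\proj{b_1}+\proj{b_2})$ has range equal to the span of the first-factor components of $V$. By the contrapositive of Lemma~\ref{le:rank2}(iv), a separable $m'\times n'$ state of rank $r$ has $\max\{m',n'\}\le r$, so $\rank\sigma_A\le\rank\sigma\le2$: the $A$-support of $V$ is at most $2$-dimensional. If it is $1$-dimensional then $V\subseteq\ket x\otimes\bbC^2$, hence $V=\lin\{\ket{x,0},\ket{x,1}\}$, already of the claimed form; otherwise a unitary on $A$ carries the $A$-support onto $\lin\{\ket0,\ket1\}$, so we may assume $V\subseteq\bbC^2\otimes\bbC^2$.

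In $\bbC^2\otimes\bbC^2$, every $2$-dimensional subspace $V=\lin\{M_1,M_2\}$ (elements viewed as $2\times2$ matrices) contains a product vector, since the binary form $\det(\alpha M_1+\beta M_2)$ of degree $\le2$ either has a root, giving a rank-one element of $V$, or vanishes identically, in which case all of $V$ is product. After a further LU transformation I may take this product vector to be $\ket{0,0}$ and, subtracting a multiple of it from the other basis vector, write $V=\lin\{\ket{0,0},\,p\ket{0,1}+q\ket{1,0}+r\ket{1,1}\}$. Recomputing $\sigma$ for this basis yields, in the $A$ and $ab$ registers, $\ket P=\ket{0,0}+q\ket{1,1}$ and $\ket Q=(p\ket0+r\ket1)_A\otimes\ket1_{ab}$. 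If $q\neq0$, then $\ket P$ is an entangled bipartite pure state while $\proj Q$ has the form $\proj{\gamma}\otimes\delta$ (or $Q=0$), so by Lemma~\ref{le:rank2}(iii) the state $2\sigma=\proj P+\proj Q$ is entangled, contradicting the separability of $\sigma$. Hence $q=0$, so $p\ket{0,1}+r\ket{1,1}=(p\ket0+r\ket1)_A\otimes\ket1_B$ and $V=\lin\{\ket0_A\ket0_B,\ (p\ket0+r\ket1)_A\ket1_B\}=\lin\{\ket{x,0},\ket{y,1}\}$ with $\ket x=\ket0$, $\ket y=p\ket0+r\ket1$; undoing the LU reductions shows the original $V$ is LU-equivalent to such a space.

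I expect the main obstacle to be the structural reduction in the third paragraph. Without the bound $\rank\sigma_A\le2$ from Lemma~\ref{le:rank2}(iv) one cannot descend to $\bbC^2\otimes\bbC^2$, and for $m\ge3$ a generic $2$-dimensional subspace of $\bbC^m\otimes\bbC^2$ contains no product vector at all, so the canonical-form step would have nothing to start from. Once inside $\bbC^2\otimes\bbC^2$ the computation is short, the single coefficient $q$ being the only obstruction, removed by Lemma~\ref{le:rank2}(iii); note that the exceptional spaces one must rule out, such as $\bbC^2\otimes\ket0$, are precisely those with $q\neq0$, so they need no separate discussion.
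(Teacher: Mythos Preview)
Your proof is correct, and the ``if'' direction matches the paper's essentially verbatim. For the ``only if'' direction you take a genuinely different route from the paper. The paper splits into two cases according to whether $V$ contains a product vector: if it does, a canonical form and Lemma~\ref{le:rank2}(iii) finish; if it does not (forcing $m\ge3$), the paper writes the basis as $\ket{0,0}+\ket{1,1}$ and $\ket{2,0}+\ket{a,1}$ and exhibits a test state whose $B$-trace has at most one product vector in its range. You instead begin with a single test state and use Lemma~\ref{le:rank2}(iv) to bound $\rank\sigma_A\le\rank\sigma\le2$, concluding that the $A$-support of $V$ is at most two-dimensional, so that after an LU you may assume $V\subseteq\bbC^2\otimes\bbC^2$ where a product vector is guaranteed by the determinant argument. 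This collapses the paper's two cases into one, at the modest cost of the preliminary rank bound. Both arguments conclude with the same appeal to Lemma~\ref{le:rank2}(iii); your version is arguably cleaner because it never needs to handle the product-vector-free case separately.
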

\begin{proof}
	Suppose $V\subseteq \cH_A\otimes\cH_B =\bbC^m\otimes\bbC^2$ is a $2$-dimensional subspace. To prove the ``if" part, we assume that $V$ is spanned by $
	\{\ket{x,a_0}, \ket{y,a_1}\}$ where $\{\ket{a_0},\ket{a_1}\}$ is an orthonormal basis in $\bbC^2$. So any state in $V\otimes\cH_{ab}$ can be written as $\sqrt{p}\ket{x,a_0}_{AB}\otimes\ket{\phi_1}_{ab}+\sqrt{1-p}\ket{y,a_1}_{AB}\otimes\ket{\phi_2}_{ab}$.
	By tracing out the system $B$, we obtain the resulting the state is a bipartite state $p\proj{x}\otimes\proj{\phi_1}+(1-p)\proj{y}\otimes\proj{\phi_2}$. So it is separable, and Definition \ref{df:ebc}	implies that $V$ is an EB space.
	
	Next we prove the ``only if" part of assertion. Assume that $V$ is not spanned by $
	\{\ket{x,0}, \ket{y,1}\}$ up to LU equivalence, and we will prove that $V$ is not an EB space. We prove it in two cases, namely (i) $V$ has a product state, and (ii) $V$ does not have a product state.
	
	(i) Since $V\subseteq\bbC^m\otimes\bbC^2$, we may assume that $V$ is spanned by $\{\ket{\alpha_{1}, \alpha_{2}}, \ket{a,0}+\ket{c,1}\}$.  Using the LU equivalence, we can rewrite
	\begin{eqnarray}
	\begin{aligned}
	U\ket{\alpha_{1}, \alpha_{2}}=&\ket{0, 0},\\
	U(\ket{a,b}+\ket{c,d})=&\ket{s,0}+\ket{t,1},
	\end{aligned}
	\end{eqnarray} and $U$ is a product unitary operator. The assumption implies that
	$\ket{s}$ is not proportional to $\ket{0}$. We consider the state $
	\ket{\Psi}_{ABab}=\ket{0,0}_{AB}\ket{0}_{ab}+(\ket{s,0}+\ket{t,1})_{AB}\ket{1}_{ab}.
	$
	Since $\ket{s}$ is not proportional to $\ket{0}$, the reduced state $	
	{\tr}_{B}\proj{\Psi}_{ABab}$ is entangled in terms of Lemma \ref{le:rank2} (iii). So $V$ is not an EB space.
	
	(ii) Suppose $V$ does not have a product state. Since any two-qubit space has a product vector, we obtain $m\ge3$. Using Lemma \ref{le:ebspace}, the two basis vectors of $V$ can be written as the two non-normalized entangled states $\ket{0,0}+\ket{1,1}$ and $\ket{2,0}+\ket{a,1}$. Consider the following state $
	\ket{\Phi}_{Vab}=
	(\ket{0,0}+\ket{1,1})_{AB}\ket{0}_{ab}
	+
	(\ket{2,0}+\ket{a,1})_{AB}\ket{1}_{ab},
	$
	One can show that there is at most one product vector in the range of ${\tr}_{B}\proj{\Psi}_{Vab}$. So the latter is an entangled state, and $V$ is not an EB space. This completes the proof.
\end{proof}

The physical implication of this lemma is to understand the bipartite state $\r$ with non-additive EOF, since such a state is known to exist and not constructed yet \cite{Hastings2009Superadditivity}. The lemma shows that if $\r$ is a two-qubit state of rank two, then its range is not spanned by $\ket{x,0},\ket{y,1}$ up to EB equivalence. Since any $2$-dimensional two-qubit subspace has a product state, we obtain that
\begin{lemma}
	\label{le:nonadditive}
	If a two-qubit state of rank two has non-additive EOF, then it is the convex sum of ${1\over\sqrt2}(\sqrt{p}\ket{a,0}+\sqrt{1-p}\ket{b,1})$ and $\ket{a,1}$, where $\ket{a},\ket{b}$ are linearly independent qubit states.	
\end{lemma}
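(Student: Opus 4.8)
The plan is to argue by contraposition, pinning down the range of $\r$ inside $\bbC^2\otimes\bbC^2$. Suppose $\r$ has rank two and non-additive EOF; then $\r$ is entangled, since a separable state has additive EOF, and by Lemma~\ref{le:eof=ebspace} its range $V$ cannot be an EB space. Since additivity of EOF is unaffected by exchanging the two parties, the space $V^{\mathrm{swap}}$ obtained by swapping $A$ and $B$ also fails to be an EB space, so by Lemma~\ref{le:2dim=2qubit} $V$ is LU-equivalent neither to $\lin\{\ket{x,0},\ket{y,1}\}$ nor to $\lin\{\ket{0,x},\ket{1,y}\}$. On the other hand every two-dimensional two-qubit subspace contains a product vector; after a local unitary on $B$ I may take this product vector to be $\ket{a,1}$ for some qubit state $\ket a$.

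Next I peel that product vector off of $\r$. Because $\ket{a,1}\in V=\mathrm{range}(\r)$ and $\r$ has rank two, the largest $t>0$ with $\r-t\proj{a,1}\succeq0$ makes $\r-t\proj{a,1}$ drop to rank one, so $\r=t\proj{a,1}+(1-t)\proj{\chi}$ for some $t\in(0,1)$ and some unit vector $\ket\chi\in V$. If $\ket\chi$ were itself a product vector it would be a second, projectively distinct, product vector of $V$, whence $\r$ would be a convex mixture of two pure product states, hence separable with zero --- therefore additive --- EOF, a contradiction. So $\ket\chi$ is entangled.

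It remains to show $\ket\chi$ has the asserted shape. Writing $\ket\chi=\ket u\otimes\ket0+\ket v\otimes\ket1$, entanglement of $\ket\chi$ forces $\ket u,\ket v$ nonzero and linearly independent; I want $\ket u\parallel\ket a$. If $\ket u\not\parallel\ket a$ but $\ket v\parallel\ket a$, then subtracting a multiple of $\ket{a,1}$ kills the $\ket1$-part of $\ket\chi$, so $V=\lin\{\ket{u,0},\ket{a,1}\}$, of the excluded form $\lin\{\ket{x,0},\ket{y,1}\}$ --- a contradiction. The only remaining possibility is $\ket u\not\parallel\ket a$ and $\ket v\not\parallel\ket a$, in which case $V$ carries exactly two product vectors whose $B$-components are not orthogonal and whose $A$-components are not orthogonal (the latter being forced by $V^{\mathrm{swap}}$ not being an EB space); this is the generic rank-two two-qubit subspace. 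I would dispose of this last configuration by noting that $V$ is mapped onto the EB space $\lin\{\ket{0,0},\ket{1,1}\}$ by an invertible local operator, and transferring additivity back to $\r$ through Wootters' closed formula --- under which the two-qubit EOF is a fixed function of the concurrence --- together with the scaling of the concurrence under reversible local filtering; again a contradiction. Hence $\ket u=\lambda\ket a$ with $\lambda\neq0$, and after normalising and absorbing a phase into $\ket a$ one gets $\ket\chi=\sqrt p\,\ket{a,0}+\sqrt{1-p}\,\ket{b,1}$ with $\ket b\propto\ket v$ linearly independent of $\ket a$ and $p\in(0,1)$; combined with $\r=t\proj{a,1}+(1-t)\proj\chi$ this is precisely the claimed expression.

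The hard part, I expect, is exactly that last configuration: a rank-two two-qubit state whose range supports two product vectors but which is not, even after swapping the parties, LU-equivalent to a canonical EB subspace. For such a state the EB-space toolkit of the paper gives no direct leverage, since the range genuinely is not an EB space, and closing the gap seems to require an external ingredient --- Wootters' formula plus the monotonicity and scaling of the two-qubit concurrence under reversible local filtering, or equivalently a direct additivity statement for these ``two-product-vector'' two-qubit states. Every other branch of the analysis collapses cleanly: the degenerate product configurations either exhibit $\r$ as separable (hence additive), or force $V$ into $\lin\{\ket{x,0},\ket{y,1}\}$, both contradicting non-additivity. So it is this single case that carries the weight of the proof.
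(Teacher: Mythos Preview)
Your case analysis up through the peeling-off step is correct and is considerably more detailed than the paper's own justification, which is the single sentence preceding the lemma (``Since any $2$-dimensional two-qubit subspace has a product state, we obtain that\ldots''). You have also correctly isolated the one configuration that resists the argument: $V$ spanned by two product vectors $\ket{a,c}$ and $\ket{a',c'}$ with $\braket{a}{a'}\ne0$ and $\braket{c}{c'}\ne0$, for instance $V=\lin\{\ket{0,0},\ket{+,+}\}$. For such $V$ neither $V$ nor $V^{\mathrm{swap}}$ is an EB space, and peeling off either product vector from any $\r$ with range $V$ yields a $\ket\chi$ whose $B$-$\ket0$ part is parallel to the \emph{other} product vector's $A$-component, never to the chosen $\ket a$; so the conclusion of the lemma cannot be reached along this route.

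Your proposed fix for this branch does not close the gap. Wootters' closed formula expresses $E_f$ as a function of the concurrence \emph{only for two-qubit states}; it tells you nothing about $E_f(\r\otimes\s)$, which lives in a $4\times4$ (or larger) system. Knowing how the single-copy concurrence rescales under the invertible filter $W_A\otimes W_B$ that sends $V$ to the EB space $\lin\{\ket{0,0},\ket{1,1}\}$ therefore does not let you transfer additivity from the filtered state back to $\r$: additivity is precisely the multi-copy statement you would need to carry across, and it is not known to be stable under non-unitary local operations. The paper itself does not treat this configuration separately---its one-line argument simply does not distinguish it from the single-product-vector case---so on this branch both your attempt and the paper's remain incomplete.
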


Based on Lemma \ref{le:2dim=2qubit}, we characterize the $2$-dimensional EB subspace in $\bbC^m\otimes\bbC^3$.

\begin{theorem}
	\label{thm:2dim=3x2}
	The $2$-dimensional subspace in $\bbC^m\otimes\bbC^3$ is an EB subspace if and only if one of the following three spaces is EB convertible to the subspace. Up to normalization factors, they are respectively spanned by
	$
	\{\ket{0,0}, \ket{1,1}+\ket{2,2}\}$, $\{\ket{0,0}+\ket{1,1},\ket{1,1}+\ket{2,2}\}$, and
	\begin{eqnarray}
	\label{eq:mx3,family3}
	&&
	\{
	\ket{0,0}+\ket{1,1},
	\quad
	\ket{0,1}
	+
	\ket{1}(d\ket{0}+fe^{i\t}\ket{1}
	+g\ket{2})
	\},
	\notag\\
	\end{eqnarray}
	where $g>0$ and $d,f,\t\ge0$ satisfy
	\begin{eqnarray}
	\label{eq:dfgt}
	&&
	-1-f^2+g^2-d^2(-2+d^2+f^2+g^2)
	\notag\\&&
	+2df^2\cos2\t\ge0.	
	\end{eqnarray}
	\qed
\end{theorem}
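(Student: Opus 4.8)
The plan is to combine a normal-form reduction in $\bbC^m\otimes\bbC^3$ with the already-established classification in $\bbC^m\otimes\bbC^2$ (Lemma \ref{le:2dim=2qubit}) and the EB-convertibility machinery of Definition \ref{df:ebc} and Lemma \ref{le:ebspace}. Let $V=\lin\{\ket{u},\ket{v}\}\subseteq\bbC^m\otimes\bbC^3$ be $2$-dimensional. First I would split into cases according to how many linearly independent product vectors $V$ contains: zero, one, exactly two (counting multiplicity), or a whole pencil. For the "if" direction each of the three listed families should be checked directly: the first space $\{\ket{0,0},\ket{1,1}+\ket{2,2}\}$ is essentially a product-plus-maximally-entangled pair and after tracing out $B$ one gets a separable state by Lemma \ref{le:rank2}(iii)-type reasoning; the second is (after an invertible $W_A$) LU-equivalent to a $2$-dimensional space living effectively in $\bbC^m\otimes\bbC^2$, so Lemma \ref{le:2dim=2qubit} applies via Lemma \ref{le:ebspace}(ii); the third family is the genuinely new one and for it I would write out the generic vector $\sqrt p\,\ket{\Phi_1}\ket{\phi_1}+\sqrt{1-p}\,\ket{\Phi_2}\ket{\phi_2}$ in $V\otimes\cH_{ab}$, trace out $B$, and show the resulting $\rho_{A,ab}$ is PPT — which, since $\rho_A$ has rank $\le 3$, forces separability by Lemma \ref{le:rank2}(i). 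The inequality \eqref{eq:dfgt} is exactly the condition under which that partial transpose stays positive; I expect it to drop out of requiring a $2\times 2$ or $3\times 3$ principal minor of $\rho_{A,ab}^{T_{ab}}$ (a determinant quadratic in $\cos 2\t$) to be nonnegative.

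For the "only if" direction I would argue contrapositively: assuming $V$ is not EB-convertible to any of the three spaces, exhibit a state $\ket{\Psi}_{Vab}$ whose $B$-reduction is entangled. Using Lemma \ref{le:ebspace}(ii) I may apply an arbitrary $W_A\otimes U_B$ to put $V$ in a convenient normal form. If $V$ has at least two product vectors $\ket{x_1,y_1},\ket{x_2,y_2}$, a local change of basis makes them $\ket{0,0}$ and $\ket{1,1}$ (or $\ket{0,0}$ and $\ket{0,1}$ if $y_1\parallel y_2$, or $\ket{0,0}$ and $\ket{1,0}$ if $x_1\parallel x_2$); the surviving freedom in the second basis vector then lands us in family 1, 2, or 3 after normalization, so if $V$ avoids all three it must actually fail an EB test, and one builds the entangled reduction directly as in the proof of Lemma \ref{le:2dim=2qubit}(i). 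If $V$ has at most one product vector, I would show $V$ cannot be EB: taking $\ket{\Psi}_{Vab}=\ket{u}_{AB}\ket{0}_{ab}+\ket{v}_{AB}\ket{1}_{ab}$, the range of $\tr_B\proj{\Psi}$ contains at most one product vector (the image of the product vector of $V$, if any), so a rank-$\le 3$ state with $\le 1$ product vector in its range is entangled — again by Lemma \ref{le:rank2}. The dimension count $m\le 3$ from Lemma \ref{le:ebdim} keeps all these reductions finite-dimensional and manageable.

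The main obstacle will be the bookkeeping in family 3: pinning down the precise normal form \eqref{eq:mx3,family3} — i.e. showing that after all allowed $W_A\otimes U_B$ reductions one genuinely needs only the three real parameters $d,f,g$ and one phase $\t$, with the other phases absorbable — and then deriving \eqref{eq:dfgt} as the sharp positivity condition rather than a sufficient one. Concretely I expect to have to parametrize $\rho_{A,ab}^{T_{ab}}$ as a $2\times2$ block (or $3\times 3$ after including the $\ket{2}$ component of system $A$), compute its characteristic polynomial, and identify that its nonnegativity is controlled by a single minor whose determinant, after collecting terms in $\cos 2\t$, is exactly the left-hand side of \eqref{eq:dfgt}; checking that this is also \emph{necessary} (no worse minor exists) is the delicate point. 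Showing that the three families are genuinely inequivalent — in particular that family 3 at the boundary of \eqref{eq:dfgt} does not collapse into family 1 or 2 — is a secondary check I would handle by a product-vector count: family 1 has a product vector, family 2 has two, and generic members of family 3 have the number dictated by \eqref{eq:dfgt} being strict versus tight.
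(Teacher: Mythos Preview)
Your proposal contains a genuine gap in the ``only if'' direction that makes the case split unworkable. You claim that if $V$ has at most one product vector then $V$ cannot be EB, arguing that the range of $\tr_B\proj{\Psi}$ would then contain at most one product vector. But both family~2 and family~3 are EB spaces with \emph{no} product vectors at all: in family~2 any nonzero combination $\a(\ket{0,0}+\ket{1,1})+\b(\ket{1,1}+\ket{2,2})=\a\ket{0,0}+(\a+\b)\ket{1,1}+\b\ket{2,2}$ is entangled, and a similar check (using $g>0$) shows family~3 also contains no product vector. Moreover the claim about the range is false: for family~2, writing $\ket{\Psi}=(\ket{0,0}+\ket{1,1})\ket{\phi_0}+(\ket{1,1}+\ket{2,2})\ket{\phi_1}$ and tracing out $B$ gives $\proj{0}\otimes\proj{\phi_0}+\proj{1}\otimes\proj{\phi_0+\phi_1}+\proj{2}\otimes\proj{\phi_1}$, whose range is spanned by three product vectors. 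So your product-vector count (``family~1 has one, family~2 has two'') is incorrect, and the whole contrapositive strategy based on that count collapses.

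A related error appears in your ``if'' direction for family~2: you assert it is, after an invertible $W_A$, effectively a subspace of $\bbC^m\otimes\bbC^2$. But the $B$-system genuinely spans $\ket{0},\ket{1},\ket{2}$, and EB-convertibility allows only a \emph{unitary} on $B$, so no such reduction exists; Lemma~\ref{le:2dim=2qubit} is not applicable here. (The direct computation above does verify family~2 is EB, so this is easy to repair.)

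The paper's proof handles the no-product-vector case quite differently. It fixes a test state $\ket{\Psi_1}_{Vab}$ with $\rank\r_B=3$, so that $\s_{A:ab}=\tr_B\proj{\Psi_1}$ has rank~$3$; separability then forces $\rank\s_A\in\{2,3\}$ by Lemma~\ref{le:rank2}(ii). When $\rank\s_A=3$, $\ket{\Psi_1}$ has tensor rank~$3$ as a tripartite state in $\bbC^3\otimes\bbC^3\otimes\bbC^2$, and an external SLOCC classification (Lemma~\ref{le:3x3x2}) singles out family~2. When $\rank\s_A=2$, another external normal-form result puts $V$ in the shape $\{\ket{0,0}+\ket{1,1},\ket{0,1}+\ket{1,2}\}$ up to $(W\otimes X)$; an explicit upper-triangular reduction of $X$ and a further invertible map on $A$ and $ab$ produce the parameters $d,f,\t,g$, and the inequality \eqref{eq:dfgt} is exactly $\det\a_{A:ab}^\G\ge0$ for the resulting two-qubit state. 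The key organizing invariant is therefore $\rank\s_A$, not the product-vector count in $V$.
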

We refer readers to its proof in Appendix \ref{app:2dim=3x2}. We demonstrate Theorem \ref{thm:2dim=3x2} by showing that the EB subspace $V\subset\cH_{AB}$ in \eqref{2*3PRL} is spanned by \eqref{eq:mx3,family3} up to EB equivalence. It suffices to consider states $\ket{\Psi_1}_{Vab}\in V\otimes\cH_{ab}$, which is a bipartite entangled state of system $AB$ and $ab$. Up to an invertible transformation on system $ab$, the state can be written as
\begin{eqnarray}
\ket{\Psi_1}_{Vab}
=&
\frac{1}{\sqrt{3}}(\ket{0,2}-\sqrt{2}\ket{1,0})\ket{0}
\notag\\+&
\frac{1}{\sqrt{3}}(\sqrt{2}\ket{0,1}-\ket{1,2})\ket{1}.
\end{eqnarray}
Let $\Pi=\frac{\sqrt{3}}{\sqrt{2}}\bma
1 & 0 \\
0 & -\sqrt{2}
\ema \otimes
\bma
0&0&1\\
1&0&0\\
0&1&0\\
\ema
\otimes
\bma
0 & \sqrt{2} \\
1 & 0
\ema $ be a product operator on the system $A,B$ and $ab$. We have
\begin{eqnarray}
\begin{aligned}
\Pi\ket{\Psi_1}_{B:A:ab}=&\ket{0}\ket{0,0}
+
\ket{1}
(\ket{0,1}+\ket{1,0})\\&
+
(d\ket{0}+fe^{i\t}\ket{1}
+g\ket{2})\ket{1,1}.
\end{aligned}
\end{eqnarray}
when $d=fe^{i \theta}=0$, and $g=2$.
So the range of $A,B$ reduced state of $\Pi\ket{\Psi_1}_{B:A:ab}$ is a special case of \eqref{eq:mx3,family3}.

Now we are in a position to present the main result of this section. That is, we characterize any EB subspace in $\bbC^m\otimes\bbC^3$.
\begin{theorem}
	\label{thm:mx3}
	The EB subspace $V\subset\bbC^m\otimes\bbC^3$ has dimension one, two or three.
	
	(i) If $\dim V=1$, then $V$ is an arbitrary one-dimensional subspace in $\bbC^m\otimes\bbC^3$.
	
	(ii) If $\dim V=2$ then $V$ is spanned by $(W\otimes X)\cS$, where $W$ is an order-$3$ matrix, $X$ is a unitary matrix, and $\cS$ is one of the three bases in Theorem \ref{thm:2dim=3x2}.
	
	(iii) If $\dim V=3$ then $V$ is spanned by $\ket{x,1},\ket{y,2}$ and $\ket{z,3}$ up to LU equivalence.
\end{theorem}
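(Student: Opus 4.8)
The plan is to dispatch the dimension bound and parts (i)--(ii) by citing earlier results, and to spend the real effort on part (iii). By Lemma~\ref{le:ebdim}, an EB subspace of $\bbC^m\otimes\bbC^3$ has dimension at most $3$, so $V$ has dimension $1$, $2$, or $3$. Part (i) is just Lemma~\ref{le:ebspace}(iii) read both ways: every line is EB, and a $1$-dimensional EB space is an arbitrary line. Part (ii) is a repackaging of Theorem~\ref{thm:2dim=3x2}, which already says a $2$-dimensional $V\subseteq\bbC^m\otimes\bbC^3$ is EB iff one of the three model bases $\cS$ is EB-convertible to it, i.e. $V=\lin\{(W_A\otimes X_B)\ket{s_i}\}$ with $X$ unitary; to match the stated form one only needs to note that the $A$-support of such a $V$ has dimension at most $3$ (it is $W$ applied to the $3$-dimensional $A$-support of $\cS$), so after restricting the codomain of $W$ we may take $W$ to be an order-$3$ matrix and write $V=(W\otimes X)\cS$.

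For part (iii) I would proceed as follows. Fix an orthonormal basis $\{\ket{v_1},\ket{v_2},\ket{v_3}\}$ of $V$ and form the test state $\ket{\Psi}_{Vab}=\tfrac{1}{\sqrt3}\sum_{i=1}^{3}\ket{v_i}_{AB}\ket{i}_{ab}$ in $V\otimes\cH_{ab}$ with $\cH_{ab}=\bbC^{3}$ and $\{\ket{i}_{ab}\}$ orthonormal. Since $V$ is EB, $\sigma:=\tr_{B}\proj{\Psi}$ is separable on $\cH_{A}\otimes\cH_{ab}$. I record two facts: its $ab$-marginal is $\tr_{AB}\proj{\Psi}=\tfrac13 I_{3}$, so $\rank\sigma=3$ (at most $3$ because $\cH_{B}$ is three-dimensional, at least $3$ because the $ab$-marginal has rank $3$); and its $AB$-marginal is $\tfrac13$ times the orthogonal projector onto $V$, a fact I save for the end. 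From $\rank\sigma=3$, Lemma~\ref{le:rank2}(iv) applied to the $\cH_A$--$\cH_{ab}$ cut forces the $A$-marginal of $\sigma$ to have rank at most $3$ (a larger rank would make $\sigma$ entangled), so Lemma~\ref{le:rank2}(ii) gives that $\sigma$ is the convex sum of three pure product states, $\sigma=\sum_{l=1}^{3}t_l\,\proj{e_l}\otimes\proj{f_l}$ with $\ket{e_l},\ket{f_l}$ unit vectors. Feeding this back into $\tr_{A}\sigma=\tfrac13 I_{3}$ forces $\{\ket{f_l}\}$ to be an orthonormal basis of $\bbC^{3}$ and $t_l=\tfrac13$: the three unit vectors $\ket{f_l}$ must span $\bbC^{3}$, and then $\sum_l t_l\proj{f_l}=\tfrac13 I_3$ makes their Gram matrix the identity.

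Next I would view $\ket{\Psi}$ as a purification of $\sigma$ with purifying system $\cH_{B}$. Because $\cH_{B}$ has dimension $3=\rank\sigma$ and $\sigma=\tfrac13\sum_{l}\proj{e_l}\otimes\proj{f_l}$ is a decomposition into exactly $\rank\sigma$ terms, the standard purification correspondence gives $\ket{\Psi}=\tfrac{1}{\sqrt3}\sum_{l=1}^{3}\ket{e_l}_{A}\ket{b_l}_{B}\ket{f_l}_{ab}$ for some orthonormal basis $\{\ket{b_l}\}$ of $\cH_{B}$. Tracing out $ab$ and using orthonormality of $\{\ket{f_l}\}$ yields $\tr_{ab}\proj{\Psi}=\tfrac13\sum_{l=1}^{3}\proj{e_l\otimes b_l}$, whose support is $\lin\{\ket{e_l,b_l}:l=1,2,3\}$; but that support equals $V$ by the saved fact. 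Hence $V=\lin\{\ket{e_1,b_1},\ket{e_2,b_2},\ket{e_3,b_3}\}$ with $\{\ket{b_l}\}$ orthonormal in $\cH_{B}=\bbC^{3}$, and applying the product unitary $I_A\otimes U_B$ that sends $\ket{b_l}\mapsto\ket{l}$ puts $V$ into the asserted form $\lin\{\ket{x,1},\ket{y,2},\ket{z,3}\}$.

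The hard part will be the rank-control step: a priori the $A$-marginal of $\sigma$ could have rank as large as $9$, and it must be cut down to at most $3$ before the low-rank separability structure of Lemma~\ref{le:rank2}(ii) can be used, which is exactly what Lemma~\ref{le:rank2}(iv) delivers. A second point deserving care is that the product decomposition of $\sigma$ has precisely $\rank\sigma$ terms, so that the purification step produces an orthonormal \emph{basis} $\{\ket{b_l}\}$ of $\cH_B$ and not merely an orthonormal family; this is automatic since there are three terms and $\rank\sigma=3$. For part (ii) the only real subtlety is the bookkeeping that turns the EB-convertibility data of Theorem~\ref{thm:2dim=3x2} into an order-$3$ matrix $W$ together with a unitary $X$.
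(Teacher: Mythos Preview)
Your proposal is correct and follows essentially the same route as the paper: pick a test state in $V\otimes\cH_{ab}$ with $\rank\rho_{ab}=3$, use the EB hypothesis to get separability of $\sigma=\tr_B\proj{\Psi}$, control ranks via Lemma~\ref{le:rank2}(iv), extract a three-term product decomposition via Lemma~\ref{le:rank2}(ii), and purify back onto $B$ to read off $V$. Your version is more explicit---in particular the orthonormality of the $\ket{f_l}$ (forced by your specific choice $\rho_{ab}=\tfrac13 I_3$) is an extra nicety that the paper simply absorbs into the phrase ``up to LU equivalence''---but the skeleton is identical.
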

\begin{proof}
	Lemma \ref{le:ebdim} shows that $\dim V\le 3$. Using Theorem \ref{thm:2dim=3x2}, we only need prove assertion (iii). Let $\ket{\Ps_2}_{Vab}\in\cH_{A:B:ab}$, such that $\r_{A:ab}=\tr_B\proj{\Ps_2}$ is separable and $\rank\r_{ab}=3$. Since $V\subset\bbC^m\otimes\bbC^3$, Lemma \ref{le:rank2} (iv) shows that $\rank\r_B=3$. Lemma \ref{le:rank2} (ii) shows that $\r_{A:ab}$ is the convex sum of three pure product states. Let $\r_{A:ab}=\sum^3_{j=1}p_j\proj{a_j,b_j}$.
	We obtain its purification, namely $\ket{\Ps_2}_{Vab}=\sum^3_{j=1}p_j\ket{a_j,j,b_j}$ up to LU equivalence.  So $V$ is spanned by $\ket{a_j,j}$'s, and assertion (iii) holds. 	
\end{proof}

So far we have investigated some EB spaces whose system $B$ has small dimension $2,3$. In the following we investigate an EB space whose $B$ has arbitrarily large dimension.

\begin{corollary}
	\label{cr:2xm}	
	The $2$-dimensional EB subspace $V$ in $\bbC^2\otimes\bbC^n$ is LU equivalent to those in $\bbC^2\otimes\bbC^k$ with $k=2,3$ or $4$
	\footnote{We say that two bipartite spaces $S$ and $T$ of the same dimension are LU-equivalent when there is a constant product unitary operation $U$ such that $U\ket{a}\in T$ for any $\ket{a}\in S$.}. The space $V$ with $k=2,3$ has been characterized in Theorem \ref{thm:mx3}. Up to normalization factors and EB equivalence, $V$ with $k=4$ is spanned by
	\begin{eqnarray}
	\label{eq:2xn}
	\ket{0,0}+	
	b_0
	(a_0\ket{0}+a_1\ket{1})
	\ket{2}
	+
	d_0
	(c_0\ket{0}+c_1\ket{1})
	\ket{3},
	\notag\\
	\ket{1,1}+	
	b_1
	(a_0\ket{0}+a_1\ket{1})
	\ket{2}
	+
	d_1
	(c_0\ket{0}+c_1\ket{1})
	\ket{3},
	\end{eqnarray}
	for any complex numbers $a_j,b_j,c_j,d_j$ for $j=0,1$ and
	\begin{eqnarray}
	\label{eq:det}
	\det
	\bma
	a_0b_1 & a_1b_0\\	
	c_0d_1 & c_1d_0\\	
	\ema\ne0.
	\end{eqnarray}
\end{corollary}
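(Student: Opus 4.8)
The plan is to obtain the first assertion from a Schmidt-rank dimension count, and the $k=4$ case from an explicit normal form. Let $V_B\sue\cH_B$ denote the $B$-support of $V$, i.e.\ the span of all $(\bra a_A\ox I_B)\ket v$ with $\ket a\in\cH_A$ and $\ket v\in V$; then $V\sue\cH_A\ox V_B$, and for any basis $\{v_1,v_2\}$ of $V$ one has $V_B=B(v_1)+B(v_2)$, where $B(v_i)$ is the $B$-support of $v_i$, of dimension equal to the Schmidt rank of $v_i$. Since $\dim\cH_A=2$, each $B(v_i)$ has dimension $\le2$, hence $\dim V_B\le4$. Also $\dim V_B\ge2$: if $\dim V_B=1$ then $V=\cH_A\ox V_B$, and with $V_B=\lin\{\ket\ps\}$ the purification $\ket\Ps_{Vab}=(\ket0_A\ket\ps_B)\ket0_{ab}+(\ket1_A\ket\ps_B)\ket1_{ab}$ has $\tr_B\proj\Ps$ entangled across $A$ and $ab$, contradicting the EB hypothesis via Definition \ref{df:ebc}. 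Applying a unitary on $\cH_B$ that maps $V_B$ onto $\bbC^k$ with $k=\dim V_B\in\{2,3,4\}$ gives the asserted LU-equivalence to a subspace of $\bbC^2\ox\bbC^k$; for $k=2,3$ the structure is already given by Lemma \ref{le:2dim=2qubit} and Theorem \ref{thm:mx3}(ii), a $2$-dimensional EB subspace of $\bbC^2\ox\bbC^2$ being in particular one of $\bbC^2\ox\bbC^3$.

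For $k=4$ the equality $\dim V_B=4$ forces every basis $\{v_1,v_2\}$ of $V$ to consist of Schmidt-rank-$2$ vectors with $B(v_1)\op B(v_2)=\cH_B\cong\bbC^4$, so $V$ contains no product vector. I would then normalize $V$ by a product of local operations: a unitary on $\cH_B$ adapted to an orthogonal splitting $\cH_B=S\op S^\perp$ into two $2$-planes, an invertible operator on $\cH_A$, and a change of basis of $V$, chosen so that $V=\lin\{\ket{w_0},\ket{w_1}\}$ takes the form \eqref{eq:2xn}. Granting this, \eqref{eq:det} is simply the requirement $\dim V_B=4$: the four $B$-slices of $w_0,w_1$ span $\bbC^4$ iff the $4\times4$ matrix of their coordinates is invertible, and a direct expansion gives its determinant as $\pm(a_0b_1c_1d_0-a_1b_0c_0d_1)$.

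The EB property of the family \eqref{eq:2xn} is then automatic, for any parameters, because the $B$-kets $\ket0,\ket1,\ket2,\ket3$ occurring in $w_0$ and $w_1$ are orthonormal. Indeed, any $\ket\Ps\in V\ox\cH_{ab}$ can be written as $\ket{w_0}\ket{\chi_0}+\ket{w_1}\ket{\chi_1}$ with $\ket{\chi_0},\ket{\chi_1}\in\cH_{ab}$, that is $\ket\Ps=\ket0_A\ket0_B\ket{\chi_0}+\ket1_A\ket1_B\ket{\chi_1}+\ket a_A\ket2_B(b_0\ket{\chi_0}+b_1\ket{\chi_1})+\ket c_A\ket3_B(d_0\ket{\chi_0}+d_1\ket{\chi_1})$, with $\ket a=a_0\ket0+a_1\ket1$ and $\ket c=c_0\ket0+c_1\ket1$. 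Tracing out $B$ kills every cross term and leaves the separable operator $\proj0\ox\proj{\chi_0}+\proj1\ox\proj{\chi_1}+\proj a\ox\proj{b_0\chi_0+b_1\chi_1}+\proj c\ox\proj{d_0\chi_0+d_1\chi_1}$ on $\cH_A\ox\cH_{ab}$, so $V$ is EB by Definition \ref{df:ebc}. Together with the reduction and the determinant computation above, this shows that \eqref{eq:2xn} with \eqref{eq:det} is exactly the class of $2$-dimensional EB subspaces with $\dim V_B=4$.

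The main obstacle is the normalization step: one has to choose the orthogonal splitting $\cH_B=S\op S^\perp$ so that $V$ is simultaneously a graph over a $2$-plane $\pi_S(V)\sue\cH_A\ox S$ spanned by two product vectors with mutually orthogonal $\cH_A$- and $S$-components, and has graph image a $2$-plane in $\cH_A\ox S^\perp$ of the same special type, which is what yields the common directions $\ket a$ and $\ket c$ in the two $S^\perp$-slots of \eqref{eq:2xn}. Proving that such a splitting always exists when $\dim V_B=4$, and tracking the degenerate sub-cases, is the technical heart of the argument; the rest is bookkeeping.
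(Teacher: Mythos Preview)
Your dimension count for the first assertion, and your direct verification that every subspace of the form \eqref{eq:2xn} is EB, are correct and in fact more explicit than what the paper writes down. The gap is exactly the normalization step you flag in the last paragraph, and it is not mere bookkeeping. As you have phrased it, the claim would be that such an orthogonal splitting of $\cH_B$ exists whenever $\dim V_B=4$; but that is false, since a generic $2$-plane $V\su\bbC^2\ox\bbC^4$ with $\dim V_B=4$ is \emph{not} EB and cannot be brought to the form \eqref{eq:2xn} (for instance $V=\lin\{\ket{0,0}+\ket{1,1},\;\ket{0,2}+\lambda\ket{0,1}+\ket{1,3}\}$ with $|\lambda|>1$ has $\r_{A:ab}$ failing PPT). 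So the EB hypothesis has to be fed into the normalization, and your outline gives no mechanism for doing so.

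The paper sidesteps this by working on the $A$--$ab$ side. Take $\ket\Ps=\ket{w_0}\ket0_{ab}+\ket{w_1}\ket1_{ab}$ for a basis $\{w_0,w_1\}$ of $V$; then $\r_{A:ab}=\tr_B\proj\Ps$ is a two-qubit state of rank $\rank\r_B=4$, and the EB hypothesis makes it \emph{separable}. The external input is then Wootters' structure theorem: a rank-four two-qubit separable state is a convex sum of exactly four pure product states. After SLOCC on $A$ (allowed by EB-equivalence) and SLOCC on $ab$ (merely a change of basis of $V$), one may take these four to be $\proj{0,0}$, $\proj{1,1}$, $\proj{a,b}$, $\proj{c,d}$. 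Purifying against $B$ gives, up to a unitary on $B$, $\ket\Ps=\ket{0,0,0}+\ket{1,1,1}+\ket{a,b,2}+\ket{c,d,3}$; reading off the range of $\tr_{ab}\proj\Ps$ yields \eqref{eq:2xn}, and the rank-four condition on $\r_{A:ab}$ is exactly \eqref{eq:det}. The orthogonality of the four $B$-kets---precisely the splitting you were searching for---comes for free from the purification. In short, the ``technical heart'' you identify is equivalent to a nontrivial structural fact about two-qubit separable states, which the paper imports rather than reproves.
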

\begin{proof}
	It suffices to prove the assertion for $k=4$. Let $\ket{\Ps}\in\cH_{A:ab:B}$, and its reduced density operator $\r_{A:ab}$ is a two-qubit separable state of rank four. It is known that such a state is the convex sum of four pure product states \cite{wootters1998}. Up to EB equivalence, we may assume that
	\begin{eqnarray}
	\begin{aligned}
	\r_{A:ab}=&
	\proj{0,0}+\proj{1,1}\\&+
	(a_0\ket{0}+a_1\ket{1})	
	(\bra{0}+a_1^*\bra{1})
	\\&\otimes
	(b_0\ket{0}+b_1\ket{1})	
	(b_0^*\bra{0}+b_1^*\bra{1})	
	\\&+
	(c_0\ket{0}+c_1\ket{1})	
	(c_0^*\bra{0}+c_1^*\bra{1})
	\\&\otimes
	(d_0\ket{0}+d_1\ket{1})	
	(d_0^*\bra{0}+d_1^*\bra{1}).
	\end{aligned}	
	\end{eqnarray}
	Since $\ket{\Ps}\in\cH_{A:ab:B}$ is the purification of $\r_{A:ab}$, up to a local unitary gate on system $B$ we have
	\begin{eqnarray}
	\begin{aligned}
	\ket{\Ps}=
	&
	\ket{0,0,0}+
	\ket{1,1,1}	
	\\&+
	(a_0\ket{0}+a_1\ket{1})
	(b_0\ket{0}+b_1\ket{1})
	\ket{2}
	\\&+
	(c_0\ket{0}+c_1\ket{1})
	(d_0\ket{0}+d_1\ket{1})
	\ket{3}.
	\end{aligned}
	\end{eqnarray}
	So the range of $\tr_{ab}\proj{\Ps}$ is spanned by the basis in \eqref{eq:2xn}. Since $\r_{A:ab}$ has rank four, Eq. \eqref{eq:det} is satisfied. The range is exactly the space $V$, and the assertion holds.
\end{proof}

To conclude this section, we construct a $3$-dimensional EB space in $\bbC^m\otimes\bbC^n$ of any integer $m\ge3$ and $n=3N\ge6$. The space is spanned by the unnormalized basis states
\begin{eqnarray}
\begin{aligned}
&&\ket{1,2}+\ket{2,1}+x_0\ket{0,3}+\sum_{i=1}^{N-2}\ket{a_{i},3(i+1)},\\
&&\ket{2,0}+\ket{0,2}+x_1\ket{1,4}+\sum_{i=1}^{N-2}\ket{b_{i},3(i+1)+1},\\
&&\ket{0,1}+\ket{1,0}+x_2\ket{2,5}+\sum_{i=1}^{N-2}\ket{c_{i},3(i+1)+2}.
\notag\\
\end{aligned}
\end{eqnarray}
where $\abs{x_j}\ge\sqrt{2}$, and $\ket{a_i},\ket{b_i},\ket{c_i}$ are arbitrarily (unnormalized) states in $\bbC^m$. One can verify that the space is an EB space. Note that if $x_j=2$ and $N=2$, then the space reduces to the $3$-dimensional EB space $U$ spanned by \eqref{eq:6*3PRL}. In this sense, we have extended the finding in \cite{PhysRev.170.379, PhysRev.188.692}.

\section{Constructing EB spaces of any dimensions}
\label{sec:higher}

We have constructed a few EB spaces of small dimensions in the last section. A nature question is to construct EB space of any dimension. The first answer to this question is the space saturating the upper bound in Lemma \ref{le:ebdim}. The second answer is the EB space of dimension three constructed at the end of last section. Nevertheless, both of them are examples and do not provide systematic methods of constructing EB spaces. To handle the problem, in this section we investigate the tensor product of two EB spaces, and show that it is also an EB space in Theorem \ref{thm:votimesw}. We extend it to the tensor product of arbitrarily many EB spaces, as well as the $B$-direct sum of EB spaces in Corollary \ref{cr:votimesw}. So we can construct EB spaces of arbitrarily large dimension. Since any subspace of an EB space is still an EB space, we can thus construct EB spaces of any dimension. First of all we present the main result of this section.

\begin{theorem}
	\label{thm:votimesw}
	Suppose $V\subset\cH_{A_1B_1}$ and $W\subset\cH_{A_2B_2}$ are two EB subspaces. Then so is the bipartite subspace $V\otimes W\subset\cH_{A_1A_2:B_1B_2}$.	
\end{theorem}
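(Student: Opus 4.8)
The plan is to verify the entanglement-breaking condition of Definition~\ref{df:ebc} for $V\otimes W$ not on an arbitrary pure state, but on a single \emph{canonical test state}, and then to exploit the fact that this test state factorizes across the two tensor factors. First I would record the following reformulation of the definition. If $V\subseteq\cH_{AB}$ has an orthonormal basis $\{\ket{v_i}\}_{i=1}^{\dim V}$, then $V$ is an EB space if and only if $\tr_B\proj{\Omega_V}$ is separable across $A:a$, where $\ket{\Omega_V}:=\sum_i\ket{v_i}_{AB}\otimes\ket{i}_a$ and $a$ is a reference system of dimension $\dim V$ (the choice of orthonormal basis of $a$ is immaterial, since two choices differ by a unitary on $a$). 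The ``only if'' direction is \eqref{EOFth} with the system $b$ trivial. For ``if'', write an arbitrary $\ket{\Psi}\in V\otimes\cH_{ab}$ uniquely as $\ket{\Psi}=\sum_i\ket{v_i}_{AB}\otimes\ket{\chi_i}_{ab}$, so that $\ket{\Psi}=(I_{AB}\otimes M)\ket{\Omega_V}$ for the linear map $M:\cH_a\to\cH_{ab}$, $\ket{i}\mapsto\ket{\chi_i}$; since $M$ does not act on $B$, we get $\tr_B\proj{\Psi}=(I_A\otimes M)(\tr_B\proj{\Omega_V})(I_A\otimes M^\dg)$, and conjugating a separable operator $\sum_n q_n\proj{\mu_n}_A\otimes\proj{\nu_n}_a$ by $I_A\otimes M$ yields $\sum_n q_n\proj{\mu_n}_A\otimes(M\proj{\nu_n}M^\dg)$, which is still separable across $A:ab$ because each $M\proj{\nu_n}M^\dg\ge0$.

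Then I would apply this to $V\otimes W\subseteq\cH_{A_1A_2:B_1B_2}$, with $A_1A_2$ and $B_1B_2$ in the roles of $A$ and $B$. Fixing orthonormal bases $\{\ket{v_i}\}$ of $V$ and $\{\ket{w_j}\}$ of $W$, the set $\{\ket{v_i}\otimes\ket{w_j}\}$ is an orthonormal basis of $V\otimes W$, and taking the reference system in product form $a=a_1a_2$ the canonical test state is
\begin{eqnarray}
\ket{\Omega_{V\otimes W}}
&=&\sum_{i,j}\ket{v_i}_{A_1B_1}\ket{w_j}_{A_2B_2}\ket{i}_{a_1}\ket{j}_{a_2}
\notag\\
&=&\ket{\Omega_V}_{A_1B_1a_1}\otimes\ket{\Omega_W}_{A_2B_2a_2}.
\end{eqnarray}
Since this is a product state across the cut $(A_1B_1a_1):(A_2B_2a_2)$, tracing out $B_1B_2$ factorizes,
\begin{equation}
\tr_{B_1B_2}\proj{\Omega_{V\otimes W}}=(\tr_{B_1}\proj{\Omega_V})\otimes(\tr_{B_2}\proj{\Omega_W}),
\end{equation}
and by hypothesis $\tr_{B_1}\proj{\Omega_V}$ is separable across $A_1:a_1$ while $\tr_{B_2}\proj{\Omega_W}$ is separable across $A_2:a_2$. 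A tensor product of two separable states is separable, so the right-hand side is separable across $A_1A_2:a_1a_2$; by the reformulation, $V\otimes W$ is an EB space.

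The step I expect to be the main obstacle---and the reason the reduction to the canonical test state is worth isolating---is that the naive ``peel $B_1$, then peel $B_2$'' argument applied to a generic $\ket{\Psi}\in(V\otimes W)\otimes\cH_{ab}$ does not work: using the EB property of $V$ (viewing $A_2B_2ab$ as the combined ancilla) produces $\tr_{B_1}\proj{\Psi}=\sum_n q_n\proj{\mu_n}_{A_1}\otimes\proj{\nu_n}_{A_2B_2ab}$, but the vectors $\ket{\nu_n}$ need not lie in $W\otimes\cH_{ab}$, so the EB property of $W$ cannot be applied to them. The canonical test state circumvents this precisely because it factorizes across $(A_1B_1a_1):(A_2B_2a_2)$, letting the two hypotheses act independently on disjoint systems. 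The only other points needing (routine) care are the two ingredients of the reformulation: that conjugation by $I_A\otimes M$ preserves separability with respect to the $A$-cut, and that every admissible $\ket{\Psi}$ is of the form $(I_{AB}\otimes M)\ket{\Omega_V}$.
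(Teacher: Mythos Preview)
Your proof is correct, but it takes a different route from the paper's, and your stated reason for preferring that route is mistaken.

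The paper argues \emph{exactly} by ``peel $B_1$, then peel $B_2$'': for arbitrary $\ket{\Psi}\in V\otimes W\otimes\cH_{ab}$, regard $W\otimes\cH_{ab}$ as the ancilla in Definition~\ref{df:ebc} applied to $V$, obtain $\tr_{B_1}\proj{\Psi}=\sum_l q_l\proj{\mu_l}_{A_1}\otimes\proj{\nu_l}_{A_2B_2ab}$, then apply Definition~\ref{df:ebc} to $W$ for each $\ket{\nu_l}$ and combine. Your objection that ``the vectors $\ket{\nu_n}$ need not lie in $W\otimes\cH_{ab}$'' is incorrect. Writing $\ket{\Psi}=\sum_i\ket{v_i}_{A_1B_1}\otimes\ket{\xi_i}$ with $\ket{\xi_i}\in W\otimes\cH_{ab}$, one has $\tr_{B_1}\proj{\Psi}=\sum_{i,j}(\tr_{B_1}\ketbra{v_i}{v_j})\otimes\ketbra{\xi_i}{\xi_j}$, so the range of $\tr_{B_1}\proj{\Psi}$ is contained in $\cH_{A_1}\otimes(W\otimes\cH_{ab})$. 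In any convex decomposition of a positive operator into rank-one terms, each term's range lies in the operator's range; hence each $\ket{\mu_n}\otimes\ket{\nu_n}\in\cH_{A_1}\otimes(W\otimes\cH_{ab})$, which forces $\ket{\nu_n}\in W\otimes\cH_{ab}$. So the direct argument works, and that is what the paper does (it leaves this range point implicit).

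Your canonical-test-state reformulation is a perfectly valid alternative and has independent interest: it is essentially the Choi--Jamio{\l}kowski picture, recasting ``$V$ is EB'' as ``the channel $X\mapsto\tr_B\bigl(\sum_{i,j}X_{ij}\ketbra{v_i}{v_j}\bigr)$ is entanglement-breaking'', after which tensor products of EB channels are EB is immediate. What it buys you is a single separability check rather than a universally quantified one; what the paper's approach buys is that no reformulation is needed at all. Either way the result follows in a few lines; just don't motivate your method by claiming the direct one fails.
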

\begin{proof}
	Let the two subspaces $V=\lin\{\ket{\a_1},...,\ket{\a_n}\}$ and $W=\lin\{\ket{\b_1},...,\ket{\b_n}\}$. Let $\ket{\Ps}\in V\otimes W \otimes \cH_{ab}$ be an arbitrary bipartite state of system $A_1A_2a$ and $B_1B_2b$. Since $V$ satisfies (\ref{EOFth}), by regarding $W\otimes \cH_{ab}$ as a subspace in the $\cH_{ab}$ of (\ref{EOFth}), we may assume that
	\begin{eqnarray}
	\text{Tr}_{B_1}
	\proj{\Ps}=\sum_l q_l
	\proj{\m_l}_{A_1}\otimes	
	\proj{\n_l}_{A_2a:B_2b}.
	\end{eqnarray}
	Similarly, since $W$ satisfies (\ref{EOFth}), we may assume that
	\begin{eqnarray}
	\text{Tr}_{B_2}
	\proj{\n_l}=\sum_m r_{l,m}
	\proj{\m_{l,m}}_{A_2}\otimes	
	\proj{\n_{l,m}}_{ab}.
	\end{eqnarray}
	The above two equations imply that
	\begin{eqnarray}
	\begin{aligned}
	\text{Tr}_{B_1B_2}
	\proj{\Ps}=&\sum_{l,m}
	q_lr_{l,m}
	\proj{\m_l}_{A_1}\otimes	
	\proj{\m_{l,m}}_{A_2}\\&\otimes	
	\proj{\n_{l,m}}_{ab}.
	\end{aligned}
	\end{eqnarray}
	By respectively regarding $A_1A_2$ and $B_1B_2$ as the system $A$ and $B$ in (\ref{EOFth}), we have proven that tracing out system $B$ destroys the entanglement between $AB$ and $ab$. So the assertion holds. This completes the proof.	
\end{proof}

Next we investigate the case of many EB spaces with two ways. The first way follows the tensor product in Theorem \ref{thm:votimesw}. For presenting the second way, we refer to $V\oplus_B W$ as the $B$-direct sum of two subspaces $V,W\subseteq\cH_{AB}$ in terms of system $B$, i.e., if $\a\in\cB(V)$ and $\b\in \cB(W)$ then $\a_B\perp\b_B$. They have been used for the separability problem of multipartite PPT states of rank at most four \cite{Lin2013Separability}, and the characterization of $2\times d$ PPT states recently \cite{cd12}.

\begin{corollary}
	\label{cr:votimesw}
	Suppose $V_j\subset\cH_{A_jB_j}$ with $j=1,2,...,n$ are EB subspaces. Then
	
	(i) so is the bipartite subspace $V_1\otimes...\otimes V_n \subset\cH_{A_1...A_n:B_1...B_n}$;
	
	(ii) so is the bipartite subspace $V_1\op_B ... \op_B V_n$.
\end{corollary}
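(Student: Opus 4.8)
The plan is to derive part (i) by a straightforward induction on $n$ using Theorem \ref{thm:votimesw} as the base/inductive step, and to prove part (ii) by directly unwinding the definition of the $B$-direct sum in the same spirit as the proof of Theorem \ref{thm:votimesw}, exploiting the orthogonality of the $B$-parts to keep the pieces from interfering.

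For (i), I would argue by induction on $n$. The case $n=1$ is trivial and $n=2$ is exactly Theorem \ref{thm:votimesw}. For the inductive step, suppose $V_1\otimes\cdots\otimes V_{n-1}\subset\cH_{A_1\cdots A_{n-1}:B_1\cdots B_{n-1}}$ is an EB subspace. Regard it as a single EB subspace $V$ sitting in $\cH_{A'B'}$ with $A'=A_1\cdots A_{n-1}$ and $B'=B_1\cdots B_{n-1}$, and regard $V_n\subset\cH_{A_nB_n}$ as the second factor. Then Theorem \ref{thm:votimesw} applied to $V\otimes V_n$ shows that $V_1\otimes\cdots\otimes V_n\subset\cH_{A_1\cdots A_n:B_1\cdots B_n}$ is an EB subspace, completing the induction. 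The only point to check is that the grouping of subsystems in Theorem \ref{thm:votimesw} is associative in the obvious way, which is immediate since the definition of EB space (Definition \ref{df:ebc}) only cares about the bipartition $AB\,{:}\,ab$ and is blind to any further internal tensor structure of $A$ or $B$.

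For (ii), let $V=V_1\oplus_B\cdots\oplus_B V_n\subseteq\cH_{AB}$, so the $B$-parts of states supported on distinct $V_j$'s are mutually orthogonal; concretely there is an orthogonal decomposition $\cH_B=\bigoplus_j \cK_j$ (after enlarging $\cH_B$ if necessary) with each $V_j\subseteq\cH_A\otimes\cK_j$. Take an arbitrary pure state $\ket{\Ps}\in V\otimes\cH_{ab}$. Because the $\cK_j$ are orthogonal, the local projections $P_j$ onto $\cK_j$ on system $B$ give $\ket{\Ps}=\sum_j (P_j)_B\ket{\Ps}=:\sum_j\ket{\Ps_j}$ with $\ket{\Ps_j}\in V_j\otimes\cH_{ab}$ and the $B$-supports of the $\ket{\Ps_j}$ mutually orthogonal. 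Hence tracing out $B$ kills all the cross terms:
\begin{eqnarray}
\tr_B\proj{\Ps}=\sum_j \tr_B\proj{\Ps_j}.
\end{eqnarray}
Each summand is, up to normalization, $\tr_B$ of a state in $V_j\otimes\cH_{ab}$, which by the EB property of $V_j$ (Eq. \eqref{EOFth}) is a separable state between $A$ and $ab$. A sum of separable states between $A$ and $ab$ is again separable between $A$ and $ab$, so $\tr_B\proj{\Ps}$ has the form required by \eqref{EOFth}, and $V$ is an EB space.

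The main obstacle — really the only subtlety — is bookkeeping in part (ii): one must be careful that the $B$-orthogonality of the summands $V_j$ really does force the cross terms $\tr_B(\ketbra{\Ps_i}{\Ps_j})$ to vanish for $i\neq j$. This is where the hypothesis $\alpha_B\perp\beta_B$ in the definition of $\oplus_B$ is used, via the orthogonal decomposition $\cH_B=\bigoplus_j\cK_j$ and the fact that $\tr_B(\ketbra{u}{v})=0$ whenever $\ket{u},\ket{v}$ have supports in orthogonal subspaces of $B$. Everything else is immediate from Definition \ref{df:ebc}, the convexity of the set of $A{:}ab$-separable states, and (for part (i)) the already-established Theorem \ref{thm:votimesw}.
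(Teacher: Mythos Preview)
Your proposal is correct and follows essentially the same approach as the paper. For (i) you package the paper's explicit $n$-fold iteration of Theorem~\ref{thm:votimesw} as a clean induction, and for (ii) both you and the paper split $\ket{\Ps}$ along the $B$-orthogonal pieces $V_j$, use the $B$-orthogonality to kill the cross terms in $\tr_B$, and then sum the resulting $A{:}ab$-separable states; your treatment of the cross terms is in fact more explicit than the paper's.
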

\begin{proof}
	(i) We refer to Lemma \ref{thm:votimesw} and the above definition of $B$-direct sum of two subspaces. Since each $V_{j}$ is a bipartite subspace satisfying (\ref{EOFth}), we have
	\begin{eqnarray}
	\begin{aligned}
	&\text{Tr}_{B_1B_2,...,B_n}\proj{\Ps}\\=&\sum_{i_{1},...,i_{n}}
	a_{i_{1}}a_{i_{1},i_{2}},...,a_{i_{1},...,i_{n}}
	\proj{\m_{i_{1}}}_{A_1}\\&\otimes	
	\proj{\m_{i_{1},i_{2}}}_{A_2}\otimes...\otimes	
	\proj{\m_{i_{1},...,i_{n}}}_{A_n}\\&\otimes	
	\proj{\n_{i_{1},...,i_{n}}}_{ab}.
	\end{aligned}
	\end{eqnarray}
	(ii) For any $\ket{\Psi}\in (V_{1}\oplus_{B} ... \oplus_{B}V_{n})\otimes\mathcal{H}_{ab}$, it can be expressed as $\ket{\Psi}\in (V_{1}\otimes\mathcal{H}_{ab})\oplus_{B} ... \oplus_{B} (V_{n}\otimes\mathcal{H}_{ab})$. For each subsystem $k\in\{1,...,n\}$, we obtain	\begin{equation}	\text{Tr}_{B_{k}}(\ket{\Psi}_{V_{k}ab}\bra{\Psi})=\sum_{i}q_{ki}\ket{\mu_{ki}}_{A}\bra{\mu_{ki}}\otimes\ket{\nu_{i}}_{ab}\bra{\nu_{i}}.
	\end{equation}
	So we have
	\begin{equation}
	\text{Tr}_{B_{1},B_{2},...,B_{n}}(\ket{\Psi}_{Vab}\bra{\Psi})=(\sum_{k,i}q_{ki}\ket{\mu_{ki}}_{A}\bra{\mu_{ki}})\otimes\ket{\nu_{i}}_{ab}\bra{\nu_{i}}.
	\end{equation}
	the equation satisfies (\ref{EOFth}), that is, system $A$ is separate with system $ab$.
\end{proof}

Using the above results one can construct more subspaces satisfying (\ref{EOFth}) applying the known ones, say those in Theorem \ref{thm:mx3}. Since any subspace of an EB space is still an EB space, we can construct an EB space of any given dimension. On the other hand, we show that the converse of Theorem \ref{thm:votimesw} also holds.

\begin{lemma}
	\label{thm:votimesw2}
	Suppose $V\subset\cH_{A_1B_1}$ and $W\subset\cH_{A_2B_2}$ are two bipartite subspaces, and $V$ is not an EB space. Then neither is the bipartite subspace $V\otimes W\subset\cH_{A_1A_2:B_1B_2}$.	
\end{lemma}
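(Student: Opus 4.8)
The plan is to prove the contrapositive-flavored statement directly: if $V$ fails the EB condition (\ref{EOFth}), then I exhibit an explicit witnessing pure state in $(V\otimes W)\otimes\cH_{ab}$ whose partial trace over $B_1B_2$ is entangled across $A_1A_2:ab$. The starting point is the failure of (\ref{EOFth}) for $V$: there exists some auxiliary system, which I will call $a'$ (playing the role of $ab$ in Definition \ref{df:ebc} applied to $V$), and a pure state $\ket{\chi}\in V\otimes\cH_{a'}$ such that $\rho_{A_1:a'}:=\tr_{B_1}\proj{\chi}$ is entangled between $A_1$ and $a'$.

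First I would reduce to a convenient form of $W$. By Lemma \ref{le:ebspace}(iii), $\dim W\ge 2$ is not automatic, but if $\dim W=1$ then $W$ is spanned by a single vector $\ket{w}\in\cH_{A_2B_2}$; writing $\ket{w}=(M_{A_2}\otimes I_{B_2})\sum_k\ket{k,k}$ for some matrix $M$ (Schmidt-type decomposition), one sees $V\otimes W$ is, up to an invertible local operation on $A_2$ (which preserves the EB property by Lemma \ref{le:ebspace}(ii)), essentially $V$ tensored with a fixed product state, so the failure of (\ref{EOFth}) transfers immediately. For $\dim W\ge 2$ I pick any nonzero $\ket{w}\in W$, again write it via its Schmidt decomposition $\ket{w}=\sum_i\sqrt{\lambda_i}\ket{e_i,f_i}$, and keep $B_2$ honestly present.

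The key step is the construction of the witness. I set
\begin{eqnarray}
\ket{\Psi}=\ket{\chi}_{A_1B_1a'}\otimes\ket{w}_{A_2B_2}\ \in\ (V\otimes W)\otimes\cH_{a'},
\end{eqnarray}
and I regard the auxiliary system for the pair $V\otimes W$ as $ab:=a'$ (so there is no ``$b$'' part here, or one takes $b$ trivial; if the definition forces a genuine bipartite $ab$ one simply adjoins a spectator qubit). Tracing out $B_1B_2$ gives
\begin{eqnarray}
\tr_{B_1B_2}\proj{\Psi}
=\bigl(\tr_{B_1}\proj{\chi}\bigr)_{A_1:a'}\otimes\bigl(\tr_{B_2}\proj{w}\bigr)_{A_2}
=\rho_{A_1:a'}\otimes\tau_{A_2},
\end{eqnarray}
where $\tau_{A_2}=\sum_i\lambda_i\proj{e_i}$. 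Now I invoke the hypothesis that $\rho_{A_1:a'}$ is entangled across $A_1:a'$; tensoring with the state $\tau_{A_2}$ on a \emph{separate} factor $A_2$, which is grouped with $A_1$ on the $A$-side, cannot remove that entanglement, because partial trace over $A_2$ returns $\rho_{A_1:a'}$, and separability across $A_1A_2:a'$ would descend to separability across $A_1:a'$ under the partial trace $\tr_{A_2}$. Hence $\tr_{B_1B_2}\proj{\Psi}$ is entangled between $A_1A_2$ and $a'$, i.e. between the $A$-system and the $ab$-system of Definition \ref{df:ebc} for $V\otimes W$. Therefore $V\otimes W$ is not an EB space.

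The only delicate point, and the one I would write most carefully, is the bookkeeping of which auxiliary Hilbert space plays the role of ``$ab$'' at each stage: the failure of (\ref{EOFth}) for $V$ is a statement ``for some choice of $a,b$ and some $\ket{\Psi}$,'' so I must make sure the $a'$ I pull out of that statement is then legitimately reused as the $ab$-system for $V\otimes W$, and that appending the spectator $A_2$ (carrying $\tau$) is allowed on the $A$-side without disturbing the $B$-side trace. Everything else is the elementary monotonicity fact that separability is preserved under the partial trace $\tr_{A_2}$, which needs no appeal to the Peres--Horodecki machinery of Lemma \ref{le:rank2}. This also explains why the statement is phrased for a \emph{Lemma} rather than a theorem: it is the easy converse direction of Theorem \ref{thm:votimesw}.
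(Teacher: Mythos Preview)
Your proposal is correct and follows essentially the same approach as the paper: take a witness $\ket{\chi}\in V\otimes\cH_{ab}$ whose partial trace over $B_1$ is entangled, tensor it with any nonzero $\ket{w}\in W$, and observe that $\tr_{B_1B_2}(\proj{\chi}\otimes\proj{w})=\rho_{A_1:ab}\otimes\tau_{A_2}$ remains entangled across $A_1A_2:ab$. The paper's proof is exactly this two-line argument; your case split on $\dim W$ is unnecessary (any $\ket{w}\in W$ works regardless), but your explicit justification via $\tr_{A_2}$ that tensoring with $\tau_{A_2}$ cannot destroy the $A_1{:}ab$ entanglement fills in a detail the paper leaves implicit.
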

\begin{proof}
	Since $V$ does not satisfy (\ref{EOFth}), there is a pure state $\ket{\Ps}_{A_1B_1:ab}$ such that $\tr_{B_1} \proj{\Ps}$ is a bipartite entangled state of system $A_1$ and $ab$.
	If $\ket{\a}\in W$, then the state $\tr_{B_1B_2} (\proj{\Ps}\otimes\proj{\a})$ is a bipartite entangled state of system $A_1A_2$ and $ab$.
\end{proof}

So far we have constructed EB spaces in $\bbC^m\otimes\bbC^n$, and it has dimension at most $n$ by Lemma \ref{le:ebdim}. Thus the state whose range is in the EB space has rank at most $n$. In the following we  construct an entangled $m\times n$ state $\r$ whose EOF is additive, and $\rank\r>n$. An example is the tensor product of $\r_{Aa:Bb}={1\over6}(I_3)_A\otimes (I_2)_B\otimes\proj{\ps}_{ab}$ where $\ket{\ps}$ is the two-qubit Bell state. So $\r_{Aa:Bb}\in\cB(\bbC^6\otimes\bbC^4)$, and has rank six. It follows from \cite{Chen2007Entanglement} that $\r_{Aa:Bb}$ has additive EOF and its EOF is equal to $1$ ebit.  This example shows an idea of constructing states with additive EOF different from the means by EB spaces. In particular, it requires only that the two states have additive EOF. Nevertheless, constructing states with additive EOF itself is a hard problem, even for two-qubit states. In contrast, we have characterized EB spaces in the space $\bbC^m\otimes\bbC^3$ and
$\bbC^2\otimes\bbC^n$, which have much larger dimensions.

\section{More applications}
\label{sec:moreapp}

In this section we apply our result of previous sections to the generalized model of two atoms, $A$ and $B$. They occupy the same site and collectively interacting with the quantized single mode of the field $C$ (in the absence of any radiation damping) known as the Tavis-Cummings  model \cite{PhysRev.170.379, PhysRev.188.692}. In the following we study a special case of the model in which the entanglement cost is equal to the EOF. 

Let us consider the model in the space $\bbC^2\otimes \bbC^2\otimes \bbC^3$. At the time $t$ the EOF is calculated, and we refer readers to \cite{PhysRevA.85.022320} for a detailed explanation. We show that the TC model belongs to EB space as well. The initial state is \begin{eqnarray}
\ket{\psi(0)}=(\alpha\ket{0,0}+\beta\ket{1,1})_{AB}\ket{0}_{C}.\end{eqnarray} The cavity mode will evolve within a Hilbert space spanned by $\{\ket{0},\ket{1},\ket{2}\}$, and the atomic system will evolve within the subspace $\{\ket{0,0},\ket{1,1},(\ket{0,1}+\ket{1,0})/\sqrt2\}$, the system at time $t$ is described by the state
\begin{eqnarray}
\begin{aligned}
\ket{\psi(t)}=&c_{1}(t)\ket{0,0}_{AB}\ket{2}_{C}+\frac{1}{\sqrt{2}}c_{2}(t)(\ket{0,1}+\ket{1,0})_{AB}\ket{1}_{C}\\&+c_{3}(t)\ket{1,1}_{AB}\ket{0}_{C}+c_{4}(t)\ket{0,0}_{AB}\ket{0}_{C},
\end{aligned}
\end{eqnarray}
where \begin{eqnarray*}
\begin{aligned}
c_{1}(t)&=-\frac{\sqrt{2}}{3}\beta(1-\cos(\sqrt{6}gt)),\\
c_{2}(t)&=-\frac{i}{\sqrt{3}}\beta\sin(\sqrt{6}gt),\\
c_{3}(t)&=\beta(1-\frac{1}{3}(1-\cos(\sqrt{6}gt))),\\
c_{4}(t)&=\alpha,
\end{aligned}
\end{eqnarray*}
and $\a$ and $\b$ are real.

One can show that the range of bipartite reduced density operator $
\r_{AC}=\tr_B\proj{\ps(t)}$ is spanned by $\{\ket{0,0}+\ket{1,1},\quad \ket{0,1}+\ket{1}(d\ket{0}+fe^{i\t}\ket{1}+g\ket{2})\}$, where the coefficients $ d=\frac{\sqrt{2}c_{4}(t)}{c_{2}(t)}, fe^{i\theta}=\frac{c_{2}(t)-\sqrt{2}c_{3}(t)}{c_{2}(t)}$ and $g=\frac{\sqrt{2}c_{1}(t)i}{c_{2}(t)}$. So it is an EB space when $d,f,g$ satisfy \eqref{eq:mx3,family3} in Theorem \ref{thm:2dim=3x2}. Since the EOF of $\r_{AC}$ is calculated in \cite{PhysRevA.85.022320}, we can obtain the  entanglement cost $E_{c}(\rho_{AC})=E_{f}(\rho_{AC})$.

Note that the entanglement cost involves the use of infinitely many copies of two atoms, say $A_1,..,A_n$ at system $A$, $B_1,...,B_n$ at system $B$, and $n\ra\infty$. Although they still interact with the fields $C_1,...,C_n$ at field $C$, our result 
shows that the use of many pairs of two atoms  $A_j,B_j$ in Tarvis-Cummings model does not decrease the entanglement between system $A$ and $C$. This is different from some other quantum-information tasks, such as the conversion from copies of three-qubit GHZ state to that of three-qubit W states. The latter relies on the fact that two copies of three-qubit W state has a tensor rank smaller than nine, namely the square of tensor rank of the W state \cite{Chen2018The}.

\section{Conclusions}
\label{sec:con}

We have studied the EB subspace for the additivity of EOF.
We have constructed the EB spaces in $\bbC^m\otimes\bbC^3$, and
applied our results to construct the EB space of any dimension. We have shown that the example in \cite{PhysRevLett.89.027901} is a special case of our results. We further have worked out the entanglement cost of a qubit-qutrit state in terms of the two-atom system of the Tavis-Cummings model. An open problem is to extend our results to the $2$-dimensional $m\times n$ EB subspace, so that more $m\times n$ states of rank two with additive EOF can be constructed. Another problem is to determine whether a given bipartite space is an EB space.

\section*{Acknowledgements}
This work was supported by the NNSF of China (Grant No. 11871089), and the Fundamental Research Funds for the Central Universities (Grant Nos. KG12040501, ZG216S1810 and ZG226S18C1).

\appendix

\section{The proof of Theorem \ref{thm:2dim=3x2} }
\label{app:2dim=3x2}

\begin{proof}
Suppose $V$ is a $2$-dimension subspace in $\cH_{AB}=\bbC^m\otimes\bbC^3$. One can verify the ``if" part of assertion using Lemma \ref{le:rank2} (i). In particular, the state $\r_{A:ab}$ from \eqref{eq:mx3,family3} is a two-qubit state. So it is separable if and only if it is PPT.

In the following we prove the ``only if" part of assertion. Assume that $V$ is a $2$-dimensional EB space. Our proof consists of two cases, namely (i) $V$ has a product state, (ii) $V$ has no product state.

(i) Up to LU equivalence we may assume that the space $V$ is spanned by the basis $\{\ket{\alpha_{1}, \alpha_{2}}, \ket{a,b}+\ket{c,d}+\ket{e,f}\}$.  Using the LU equivalence, we can rewrite the basis as
\begin{eqnarray}
	\begin{aligned}
	U\ket{\alpha_{1}, \alpha_{2}}=&\ket{0, 0},\\
	U(\ket{a,b}+\ket{c,d}+\ket{e,f})=&\ket{s,0}+\ket{t,1}+\ket{u,2}.
	\end{aligned}
\end{eqnarray}
If $\ket{s}$ is proportional to $\ket{0}$, then $V$ is spanned by $\ket{0,0}$ and $\ket{t,1}+\ket{u,2}$. By Definition \ref{df:ebc} the space spanned by $
	\{\ket{0,0}, \ket{1,1}+\ket{2,2}\}$ is
	EB-convertible to $V$. The ``only if" part of assertion holds.
	
On the other hand, suppose that $\ket{s}$ is not
proportional to $\ket{0}$. We consider the
bipartite state
	\begin{eqnarray}
	\begin{aligned}
	\ket{\Psi}_{Vab}=&\ket{0,0}_{AB}\ket{0}_{ab}+(\ket{s,0}+\ket{t,1}+\ket{u,2})_{AB}\ket{1}_{ab},
	\\
	\r_{A:ab}=&{\tr}_{B}\proj{\Psi}_{Vab}.
	\end{aligned}
\end{eqnarray}
Then the state $\r_{A:ab}$ is entangled in terms of Lemma \ref{le:rank2} (iii).
It is a contradiction with the assumption that $V$ is an EB space. We have proven the ``only if" part of assertion for case (i).

(ii) The 2-dimensional EB subspace $V\subset\cH_{AB}=\bbC^m\otimes\bbC^3$ has no product state. To prove the ``only if" part of assertion,
it suffices to consider the bipartite entangled states $\ket{\Psi_1}_{Vab}\in V\otimes\cH_{ab}$, and the bipartite reduced density operator
\begin{eqnarray}
\label{eq:sAab}
\s_{A:ab}={\tr}_{B}\proj{\Psi_1}_{Vab}.
\end{eqnarray}
Since  $V\subset\bbC^m\otimes\bbC^3$, we have $\rank\r_B\le3$.

(ii.a) If $\rank\r_B<3$ then $V$ is a $2$-dimensional EB subspace of $\bbC^m\otimes\bbC^2$ up to LU equivalence. It follows from Lemma \ref{le:2dim=2qubit} that $V$ is spanned by
	$
	\{\ket{x,0}, \ket{y,1}\}$ up to LU equivalence. So the space spanned by $
	\{\ket{0,0}, \ket{1,1}+\ket{2,2}\}$ is EB convertible to $V$. We have proven the ``only if" part of assertion for case (ii.a).

On the other hand, let $\rank\r_B=3$.
Eq. \eqref{eq:sAab} implies that
\begin{eqnarray}
\label{eq:rankaab}	
\rank\s_{A:ab}=\rank\r_B=3.
\end{eqnarray}
Since $V$ is an EB space, the bipartite state $\s_{A:ab}$ is separable. Lemma \ref{le:rank2} (ii) implies that $\rank\s_A=2$ or $3$. Up to LU equivalence we can assume that the range of $\s_A$ is a subspace of $\bbC^3$, and $\rank\s_A=2$ or $3$.

(ii.b) Let $\rank\s_A=3$. Using \eqref{eq:rankaab} and Lemma \ref{le:rank2} (ii), we obtain that $\ket{\Ps_1}_{Vab}$ has tensor rank three. That is, $\ket{\Ps_1}_{Vab}$ is the linear combination of exactly three tripartite product vectors in the space $\cH_{A}\otimes\cH_B\otimes\cH_{ab}
=\bbC^3\otimes\bbC^3\otimes\bbC^2$.
So up to SLOCC equivalence, $\ket{\Ps_1}_{Vab}$ is one of the first two states in \eqref{eq:3x3x2} using Lemma \ref{le:3x3x2}. Further, the assumption at the beginning of proof of (ii) says that $V\subset\cH_{AB}$ has no product state. So up to SLOCC equivalence $\ket{\Ps_1}_{Vab}$ is the second state in \eqref{eq:3x3x2}, i.e.,
\begin{eqnarray}\begin{aligned}
\ket{\Ps_1}=&
(W\otimes X\otimes Y)\\
&(\ket{0,0,0}+\ket{1,1,1}+\ket{2,2,0}+\ket{2,2,1}),	
\end{aligned}
\end{eqnarray}
for invertible matrices $W,X$ and $Y$.
So the space $V$ is spanned by $(W\otimes X)(\ket{0,0}+\ket{1,1})$  and $(W\otimes X)(\ket{1,1}+\ket{2,2})$. Hence the space spanned by $\{\ket{0,0}+\ket{1,1},\ket{1,1}+\ket{2,2}\}$ is EB convertible to $V$. We have proven the "only if" part of assertion for case (ii.b).

(ii.c) Let $\rank\s_A=2$. Recall the assumption at the beginning of proof of (ii), namely the space $V$ has no product vector.
Table 1 of \cite{PhysRevA.74.052331} shows that $V$ is spanned by $(W\otimes X)(\ket{0,0}+\ket{1,1})$  and $(W\otimes X)(\ket{0,1}+\ket{1,2})$ with invertible matrices $W,X$. In the following we show that $V$ is LU equivalent to the space in \eqref{eq:mx3,family3}. It will prove the "only if" part of assertion for case (ii.c).

Up to EB equivalence we may assume that $X=[x_{ij}]=[\ket{x_0},\ket{x_1},\ket{x_2}]$ is upper triangular, $x_{jj}>0$, and
\begin{eqnarray}
\label{eq:psi1}
\ket{\Psi_1}_{Vab}
=&
(I_2\otimes X)(\ket{0,0}+\ket{1,1})\ket{0}
\notag\\+&
(I_2\otimes X)(\ket{0,1}+\ket{1,2})\ket{1}.
\end{eqnarray}
If we choose
$P=
\bma
1&0\\
-{x_{12}\over x_{11}}&1\\
\ema^{\otimes2}$ and $P'=
\bma
1\over \sqrt{x_{11}}&0\\
0& \sqrt{x_{11}}\over x_{22}\\
\ema^{\otimes2}$ on $\cH_A\otimes\cH_{ab}$,
then
\begin{eqnarray}
\label{eq:phi}
\begin{aligned}
\ket{\Ph}_{B:A:ab}
=&
(I_B\otimes	P'P)
\ket{\Psi_1}_{B:A:ab}
\\=&
\ket{0}\ket{0,0}
+
\ket{1}
(\ket{0,1}+\ket{1,0})
\\+&
(d\ket{0}+fe^{i\t}\ket{1}
+g\ket{2})\ket{1,1},
\end{aligned}
\end{eqnarray}
where the four parameters $g>0$ and $d,f,\t\ge0$. Note that the separability of $\s_{A:ab}$ in \eqref{eq:sAab} is invariant under SLOCC equivalence. So $\a_{A:ab}=\tr_B\proj{\Phi}_{B:A:ab}$ is a bipartite separable state. Since $\rank\s_A=2$ and $V$ is a $2$-dimensional EB subspace, we obtain that $\a_{A:ab}$ is a two-qubit separable state. It is equivalent to the condition
$\det\a_{A:ab}^\G\ge0$ \cite{PhysRevA.74.010302}.
Using \eqref{eq:phi}, this condition is equivalent to the inequality in \eqref{eq:dfgt}. So $V$ is LU equivalent to the space in \eqref{eq:mx3,family3}. We have proven the ``only if" part of assertion for case (ii.c).
\end{proof}

The following fact is from Table 1 of \cite{PhysRevA.74.052331}.
\begin{lemma}
\label{le:3x3x2}
Consider the set of tripartite pure states $\r_{ABC}$, such that $\rank\r_A=\rank\r_B=3$ and $\rank\r_C=2$. The set has exactly six SLOCC-inequivalent states as follows.
\begin{eqnarray}
\label{eq:3x3x2}
\begin{aligned}
&\ket{0,0,0}+\ket{1,1,1}+\ket{2,2,0},\\
&\ket{0,0,0}+\ket{1,1,1}+\ket{2,2,0}+\ket{2,2,1},\\
&\ket{1,0,0}+\ket{0,1,0}+\ket{1,2,1}+\ket{2,1,1},\\
&\ket{0,0,1}+\ket{1,0,0}+\ket{0,1,0}+\ket{1,2,1}+\ket{2,1,1},\\
&\ket{0,0,1}+\ket{1,0,0}+\ket{0,1,0}+\ket{2,2,0},\\
&\ket{0,0,1}+\ket{1,0,0}+\ket{0,1,0}+\ket{2,2,1}.
\end{aligned}
\end{eqnarray}	
\qed
\end{lemma}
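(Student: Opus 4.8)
The statement is a classification of pure states in $\bbC^3\otimes\bbC^3\otimes\bbC^2$ with maximal local ranks up to SLOCC, and the natural tool is the Kronecker theory of pencils of matrices. The plan is to encode the state as a matrix pencil, translate the hypotheses and the SLOCC action into statements about the pencil, and then read off the answer from Kronecker's canonical form.

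First I would write an arbitrary $\ket{\ps}\in\bbC^3\otimes\bbC^3\otimes\bbC^2$ as $\ket{\ps}_{ABC}=\ket{0}_C\ket{M_0}_{AB}+\ket{1}_C\ket{M_1}_{AB}$, where the vector $\ket{M}_{AB}=\sum_{i,j}M_{ij}\ket{i,j}$ is identified with the $3\times3$ matrix $M=[M_{ij}]$. A product SLOCC operation $X_A\otimes Y_B\otimes Z_C$ acts by $(M_0,M_1)\mapsto(XM_0Y^T,XM_1Y^T)$ followed by the replacement of the ordered pair $(M_0,M_1)$ by another ordered basis of $\lin\{M_0,M_1\}$ dictated by $Z$. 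Hence the SLOCC orbit of $\ket{\ps}$ corresponds precisely to the orbit of the pencil $P(\l,\m)=\l M_0+\m M_1$ under strict equivalence $P\mapsto XPY^T$ together with a $\PGL_2$-change of the parameter $(\l:\m)\in\bbP^1$. Next I would translate the rank hypotheses: the range of $\r_A$ is the sum of the column spaces of $M_0,M_1$, the range of $\r_B$ is (the complex conjugate of) the sum of their row spaces, and $\rank\r_C=2$ simply says $M_0,M_1$ are linearly independent; so $\rank\r_A=\rank\r_B=3$ forbids exactly the singular Kronecker blocks $L_0$ (a zero column) and $L_0^T$ (a zero row), while $\rank\r_C=2$ forbids the pencil being one-dimensional.

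Then I would enumerate. By Kronecker's theorem $P$ is strictly equivalent to a direct sum of minimal-index blocks $L_\e$ (size $\e\times(\e+1)$) and $L_\eta^T$ (size $(\eta+1)\times\eta$), Jordan blocks at finite eigenvalues, and nilpotent blocks (eigenvalue at $\infty$), the block sizes summing to $3$ in each dimension. If $P$ is singular, then after discarding $L_0$ and $L_0^T$ the only way to tile a $3\times3$ array is $L_1\op L_1^T$, giving class $3$ in \eqref{eq:3x3x2}. If $P$ is regular, its elementary divisors give a partition of $3$ distributed over points of $\bbP^1$; using that $\PGL_2$ is sharply $3$-transitive I would move the eigenvalues to $\{0,1,\infty\}$, after which the only invariant is the combinatorial type: three simple eigenvalues (class $2$); two eigenvalues with block sizes $(1,1)$ and $(1)$ (class $1$); two eigenvalues with block sizes $(2)$ and $(1)$ (class $6$); one eigenvalue with a single $3\times3$ Jordan block (class $4$); one eigenvalue with blocks $(2,1)$ (class $5$). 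The remaining formal type, one eigenvalue with blocks $(1,1,1)$, forces $P=\l M$ up to reparametrization, i.e.\ $M_0,M_1$ proportional, contradicting $\rank\r_C=2$, so it drops out. Finally, for each canonical pencil I would reconstruct a representative $\ket{\ps}$, check it coincides (up to relabelling) with the matching entry of \eqref{eq:3x3x2}, and note that the Kronecker data (singular versus regular, the block sizes, the number of distinct eigenvalues) is a complete invariant of strict equivalence composed with $\PGL_2$, so the six representatives are pairwise SLOCC-inequivalent.

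The routine but error-prone part, and the main obstacle, is the exhaustive bookkeeping of which Kronecker block patterns fit into a $3\times3$ array once $L_0$, $L_0^T$ and the proportional case are removed, together with verifying for each surviving canonical pencil that its column and row spaces are full, so that it genuinely realizes $\rank\r_A=\rank\r_B=3$. Everything else is a direct application of Kronecker's classification and the transitivity of $\PGL_2$ on $\bbP^1$; alternatively, one may simply cite Table~1 of \cite{PhysRevA.74.052331}, where this pencil analysis is carried out.
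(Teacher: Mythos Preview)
Your proposal is correct and, in fact, goes well beyond what the paper does: the paper gives no proof at all for this lemma, merely citing it as ``from Table~1 of \cite{PhysRevA.74.052331}''. Your Kronecker-pencil argument is precisely the machinery behind that reference, and your case analysis (singular pencil $L_1\oplus L_1^T$ versus the five regular types, with the $(1,1,1)$ case excluded by $\rank\r_C=2$) matches the six listed representatives; I checked each of the six states against your assigned Kronecker data and they agree. So you have supplied an actual proof where the paper only supplies a pointer, and by the same route the cited source uses.
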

One can easily show that the first two states in \eqref{eq:3x3x2} has tensor rank  three, and the last four states has tensor rank greater than three. For the latest progress on tensor rank, we refer readers to the paper \cite{Chen2018The}.

\bibliography{entanglementcost}
\bibliographystyle{unsrt}

\end{document}